\xdef\csname b\x\endcsname{\noexpand\ensuremath{\noexpand\mathbf{\x}}}
\xdef\csname c\x\endcsname{\noexpand\ensuremath{\noexpand\mathcal{\x}}}
\xdef\csname B\x\endcsname{\noexpand\ensuremath{\noexpand\mathbb{\x}}}
\xdef\csname b\x\endcsname{\noexpand\ensuremath{\noexpand\mathbf{\x}}}
\newcommand{\ab}{\allowbreak}
\newcommand{\xc}{\mathop{\mathrm{xc}}} 
\newcommand{\size}{\mathop{\mathrm{size}}} 
\newcommand{\conv}{\mathrm{conv}}
\newcommand{\qappolytope}[1]{\ensuremath{\mathrm{QAP}_{#1}}}
\newcommand{\qaprelax}[2]{\ensuremath{\mathrm{R}^{#1}_{#2}}}
\newcounter{qapno}
\DeclareRobustCommand{\qaplabel}[1]{%
	\refstepcounter{qapno}%
	\label{#1}}
\begin{document}

\title{On the Complexity of Some Facet-Defining Inequalities of the QAP-polytope}

\author{Pawan Aurora\inst{1} \and
	Hans Raj Tiwary\inst{2}
}

\institute{IISER Bhopal, India \email{paurora@iiserb.ac.in} \and
	Department of Applied Mathematics, Charles University, Prague, Czech Republic \email{hansraj@kam.mff.cuni.cz}
}
	\maketitle
	
\begin{abstract}
	The Quadratic Assignment Problem (QAP) is a well-known NP-hard problem that is equivalent to optimizing a linear objective function 
	over the QAP polytope. The QAP polytope with parameter $n$ - \qappolytope{n} - is defined as the convex hull of rank-$1$
	matrices $xx^T$ with $x$ as the vectorized $n\times n$ permutation matrices. 
	
	In this paper we consider all the known exponential-sized families of facet-defining inequalities of the QAP-polytope. We describe a new family of valid inequalities that we show to be facet-defining. We also show that 
	membership testing (and hence optimizing) over some of the known classes of inequalities is coNP-complete. We complement our 
	hardness results by showing a lower bound of $2^{\Omega(n)}$ on the extension complexity of all relaxations of \qappolytope{n} for which any of the known 
	classes of inequalities are valid.
\end{abstract}
	
\section{Introduction}
\label{sec:intro}
The Quadratic Assignment Problem (QAP) is a fundamental combinatorial optimization problem from the category of facility location 
problems \cite{KoopmansB55,Lawler63}. QAP is defined as the following problem: given $n$ facilities and $n$ locations, distances
$d_{ij}$ between all pairs of locations $i,j\in[n]$, flows $f_{ij}$ between all pairs of facilities $i,j\in[n]$ and costs $c_{ij}$ 
of opening facility $i$ at location $j$, for all pairs $i,j\in[n]$, find an assignment $\sigma$ of the $n$ facilities to the $n$ 
locations so that the total cost given by the function $\sum_{i,j}f_{ij}d_{\sigma(i)\sigma(j)}+\sum_ic_{i\sigma(i)}$ is minimized.
The problem is known as QAP since it can be modeled as optimizing a quadratic function over linear and binary constraints. However, 
several linearizations of the problem have been proposed. For details refer to the book \cite{Cela1997} and the citations therein.

Given an instance of QAP, it is NP-hard to approximate the optimum within any constant 
factor \cite{SahniG1976}. What makes QAP one of the ``hardest'' problems in combinatorial optimization 
is the fact that unlike most NP-hard combinatorial optimization problems, it is practically intractable. It is generally considered
impossible to solve to optimality QAP instances of size larger than $20$ within reasonable time limits \cite{Cela1997}.

As is common with combinatorial obtimization problems, QAP can be viewed as the problem of optimizing a linear objective function 
over the convex hull of all feasible solutions. To this end, the QAP polytope is defined
as $\displaystyle \qappolytope{n}=conv\left(\left\{yy^T|y=vec(P_{\sigma}),\sigma\in S_n\right\}\right)$, where $P_{\sigma}$ is the $n\times n$ permutation matrix corresponding to the premutation $\sigma$ and $y=vec(P_{\sigma})$
is its vectorization. Following the notation of \cite{AuroraM18}, we denote a vertex $yy^T$ as $P^{[2]}_{\sigma}$. Note that 
$P^{[2]}_{\sigma}(ij,kl)=P_{\sigma}(i,j)\cdot P_{\sigma}(k,l)$. Clearly, $\qappolytope{n}\subset\mathbb{R}^{n^2\times n^2}$. In fact $\qappolytope{n}$ can be embedded in
$\mathbb{R}^{(n^4+n^2)/2}$ since each point in the polytope is a symmetric $n^2\times n^2$ matrix and we could only store its upper
(or lower) triangular part. However, in this paper we would conveniently denote a point in \qappolytope{n} by a $n^2\times n^2$ 
matrix.

One of the methods for solving hard combinatorial optimization problems is the method of branch-and-cut \cite{PadbergR1991}. For this 
method to be effective for the QAP, it is important to identify new valid and possibly facet-defining inequalities for the QAP-polytope
and to develop the corresponding separation algorithms.
Given that it is NP-hard to optimize over the QAP-polytope, it is probably impossible to characterize all its facets \cite{Pit:prob89}. In 
\cite{JungerK1996a,JungerK1996b,PadbergR1996}, the authors obtain early results on the combinatorial structure of the QAP-polytope 
and some of its facet-defining inequalities. 
In \cite{AuroraM18} the authors list all the known facets of the QAP-polytope besides the equations that define its affine hull. 
In this paper we add another exponential sized family to the list of known facets of the QAP-polytope. Optimizing the QAP objective function over any of the relaxations
given by these families can provide an approximate solution to the QAP, provided the optimization problem can be efficiently solved. In this paper we also show
that optimizing over the relaxations given by some of these exponential sized family of facet-defining inequalities is NP-hard. We do it by proving that the corresponding membership testing problem is coNP-complete for the appropriate classes of inequalities. 

Furthermore, we prove a lower bound of $2^{\Omega(n)}$ on the extension complexity of bounded relaxations of \qappolytope{n} obtained by each of these families of 
inequalities. 

\medskip
To summarize, our main contributions are as follows. 
\begin{itemize}
	\item We identify a new family of valid inequalities for \qappolytope{n} (Section \ref{sec:relaxations}) and prove that they are facet-defining (Section \ref{sec:newfacets}),	
	\item We prove that membership testing for three out of the five known families of valid inequalities for the QAP-polytope(including the new one we introduce) is coNP-complete (Section \ref{sec:sepncomp}), and
	\item We prove a lower bound of $2^{\Omega(n)}$ for the extension complexity of any bounded\footnote{In fact, boundedness is not required for the results in Section \ref{sec:extncomp}. However, since we will rely on existing results, such as Theorem \ref{thm:extformcc}, that are published with the boundedness assumption, we will include this assumption.} relaxation of \qappolytope{n} that has any of the known families as valid inequalities (Section \ref{sec:extncomp}).
\end{itemize}

\subsection{Extension Complexity}
Let $P\subset\BR^n$ be a polytope. A polytope $Q\subset\BR^{n+r}$ is called an \emph{extension} or an \emph{extended formulation} of $P$ if $$P=\{x\in\BR^n~|~\exists y\in\BR^r, (x,y)\in Q\}.$$
Let $\size(P)$ denote the number of facets of polytope $P$ and let $Q\downarrow P$ denote that $Q$ is an extended formulation of $P$. Then, the extension complexity of a polytope $P$ - denoted by $\xc(P)$ - is defined to be $\displaystyle\min_{Q\downarrow P}\size(Q)$.

Extended formulations are a very useful tool in combinatorial optimization as they allow the possibilty of drastically reducing the size of a Linear Program by introducing new variables (See \cite{VanderbeckWolsey2010,Wolsey11,Kaibel11,anor/ConfortiCZ13} for surveys). In the past decade lower bounds on the extension complexity of various polytopes have been studied \cite{mp/Rothvoss13,jacm/FioriniMPTW15,mp/AvisT15,jacm/Rothvoss17} and the notion generalized and studied in various settings such as general conic extensions \cite{mor/GouveiaPT13}, semidefinite extensions \cite{mp/BrietDP15,stoc/LeeRS15}, approximation \cite{BM13,CLRS13,mor/BraunFPS15,mor/BazziFPS19}, parameterization \cite{orl/Buchanan16,dam/GajarskyHT18}, generalized probabilistic theories in Physics \cite{Fiorini_2014}, and information theoretic perspective \cite{BM13,cc/BraunJLP17}.

Superpolynomial lower bounds on extension complexity are known for polytopes related to many NP-hard problems \cite{jacm/FioriniMPTW15,mp/AvisT15} as well as for the Matching polytope: the convex hull of characteristic vectors of all matchings in $K_n$ \cite{jacm/Rothvoss17}. High extension complexity of the Matching polytope highlights the fact that a superpolynomial lower bound on the extension complexity  cannot be taken to mean that the underlying optimization problem is not solvable in polynomial time. However, these lowers bounds are unconditional and do not require standard complexity theoretic assumption such as $P\neq NP$. Moreover, apart from the exception of Matching polytope, linear optimization over all known polytopes with superpolynomial lower bound is infeasible. Either because the linear optimization over the polytope is NP-hard \cite{jacm/FioriniMPTW15,mp/AvisT15} or the polytope is not explicitly given, as is the case for some matroid polytopes \cite{mp/Rothvoss13}.

For the purposes of this paper the most relevant characterization of extension complexity is given by Faenza et al. \cite{mp/FaenzaFGT15} where the authors prove the equivalence between existence of an extended formulation of size $r$ with the existence of a certain two-party communication game requiring an exchange of $\Theta(\log{r})$ bits. We will describe this connection here and use it as a black box in our proofs of lower bounds on the extension complexity of the polytopes considered here.

\subsubsection{EF-protocols: Computing a matrix in expectation}
Let $\mathrm{M}$ be an $m\times n$ matrix with non-negative entries. Consider a communication game between two players: Alice and Bob. Alice and Bob both know the matrix $\mathrm{M}$ and can agree upon any strategy prior to the start of the game. In each round of the game, Alice receives a row index $i\in[m]$ and Bob a column index $j\in[n]$. Both Alice and Bob have no restriction on the computations that they perform and can also use (private) random bits. They can also exchange information by sending some bits to the other player. At some point one of them outputs a non-negative number and the round finishes.

Since they are allowed the use of random bits, the output $X_{ij}$ when Alice and Bob receive inputs $i$ and $j$ respectively, is a random variable. We says that their strategy is an \emph{EF-protocol} for $\mathrm{M}$ if $\BE[X_{ij}]=M_{ij}$ for all $(i,j)\in[m]\times[n]$, where $\BE[X_{ij}]$ is the expected value of the random variable $X_{ij}$. The complexity of an EF-protocol is defined to be the maximum number of bits exchanged between Alice and Bob for any input $i,j$ to Alice and Bob respectively.

Let $P=\{x~|~Ax\leqslant b\}=\conv(\{v_1,\ldots,v_n\})$ be a polytope where $A$ is an $m\times d$ real matrix, $b\in\BR^m,$  $v_i\in\BR^d$, and $\conv(S)$ denotes the convex hull of the points in a set $S$. The slack matrix of $P$ with respect to this representation - denoted by $S(P)$ - is the $m\times n$ (non-negative) matrix whose entry at $i$-th row and $j$-th column is $b_i-A_iv_j$, where $A_i$ denotes the $i$-th row of the matrix $A$. Note that a polytope is not defined uniquely this way: one can always embed the polytope in higher dimensional space, and add redundant inequalities and points to the descriptions. However, in what follows, none of that makes any difference and one can choose any description that they like. This justifies the notation $S(P)$ for any slack matrix of $P$ even though the particular description of $P$ that defines this matrix is completely ignored in the notation.

The following connection -- which we will use in Section \ref{sec:extncomp} -- was shown by Faenza et al. \cite{mp/FaenzaFGT15} between existence of an EF-protocol computing a slack matrix of polytope $P$ and that of an extended formulation of $P$.

\begin{theorem}\cite{mp/FaenzaFGT15}\label{thm:extformcc}
	Let $M$ be a non-negative matrix such that any EF-protocol for $M$ has complexity at least $c$. Further, let $P$ be a polytope such that $M$ is a submatrix of some slack matrix $S(P)$ of $P$. Then, $\xc(P)\geqslant 2^c$.
\end{theorem}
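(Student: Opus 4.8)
The plan is to route the statement through two classical facts: Yannakakis's factorization theorem, which identifies extension complexity with the nonnegative rank of a slack matrix, and the two-way correspondence between nonnegative factorizations and EF-protocols that is the real content behind the cited result. Throughout, write $\mathrm{rank}_+(N)$ for the nonnegative rank of a nonnegative matrix $N$, that is, the least $r$ for which $N=TU$ with $T$ an $m\times r$ and $U$ an $r\times n$ nonnegative matrix. The whole argument will reduce to proving the single inequality $\mathrm{rank}_+(M)\geqslant 2^c$ and then transporting it to $\xc(P)$.

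First I would record two structural reductions. The easy one is that nonnegative rank is monotone under taking submatrices: if $M$ is obtained from $S(P)$ by deleting rows and columns, then restricting any nonnegative factorization $S(P)=TU$ to the surviving rows of $T$ and columns of $U$ yields a nonnegative factorization of $M$, so $\mathrm{rank}_+(M)\leqslant \mathrm{rank}_+(S(P))$. The second is Yannakakis's theorem, which gives $\xc(P)=\mathrm{rank}_+(S(P))$. Combining the two, it suffices to establish $\mathrm{rank}_+(M)\geqslant 2^c$, since then $\xc(P)=\mathrm{rank}_+(S(P))\geqslant \mathrm{rank}_+(M)\geqslant 2^c$.

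The heart of the proof is therefore the correspondence relating $\mathrm{rank}_+(M)$ to the EF-protocol complexity of $M$. I would prove that the minimum EF-protocol complexity of $M$ equals $\lceil\log_2 \mathrm{rank}_+(M)\rceil$, via the two directions. For the direction I actually need here, suppose $\mathrm{rank}_+(M)=r$ and fix a factorization $M_{ij}=\sum_{k=1}^{r}T_{ik}U_{kj}$ with $T,U\geqslant 0$; I then build an EF-protocol in which, on input $i$, Alice forms the weights $T_{i1},\dots,T_{ir}$ with sum $\alpha_i$, samples an index $K\in[r]$ with probability $T_{iK}/\alpha_i$, and transmits $K$ using $\lceil\log_2 r\rceil$ bits, after which the output is the scaled entry $\alpha_i\,U_{Kj}$. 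Its correctness is the one-line computation
\[
\BE[X_{ij}]=\sum_{k=1}^{r}\frac{T_{ik}}{\alpha_i}\cdot\alpha_i U_{kj}=\sum_{k=1}^{r}T_{ik}U_{kj}=M_{ij},
\]
so this is a valid EF-protocol of complexity $\lceil\log_2 r\rceil$. Since by hypothesis every EF-protocol for $M$ costs at least $c$, we get $\lceil\log_2 r\rceil\geqslant c$ and hence $\mathrm{rank}_+(M)=r\geqslant 2^{c}$, which together with the previous paragraph finishes the argument. (The converse direction, protocol $\Rightarrow$ factorization, is obtained by viewing the leaves of the protocol tree as a nonnegative rank-$1$ rectangle cover of $M$, and is the part that pins down the exponent $2^c$ exactly.)

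The step I expect to be the main obstacle is precisely the factorization-to-protocol construction, and within it the handling of the normalization constant $\alpha_i$: the output $\alpha_i U_{Kj}$ must be produced by a party that simultaneously knows the index $K$ and the scalar $\alpha_i$, so one has to be careful about which player samples, which communicates, and which emits the final real number, all while keeping the only crossing information the $\lceil\log_2 r\rceil$ bits encoding $K$. The Faenza et al. construction arranges exactly this, so in the writeup I would invoke their protocol for this direction as a black box and devote the care to verifying the expected-value identity above and the submatrix monotonicity of $\mathrm{rank}_+$, both of which are elementary once the correspondence is in hand.
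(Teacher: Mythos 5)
A preliminary remark: the paper contains no proof of this theorem --- it is imported from Faenza et al.~\cite{mp/FaenzaFGT15} and used explicitly as a black box --- so the comparison is really with the proof in that reference. Your scaffolding coincides with it: Yannakakis's factorization theorem $\xc(P)=\mathrm{rank}_+(S(P))$, monotonicity of nonnegative rank under taking submatrices, and the reduction of the whole statement to the claim that a nonnegative factorization of size $r$ yields an EF-protocol of complexity $\lceil\log_2 r\rceil$. Those two reductions are correct and elementary, exactly as you present them, and only the factorization-to-protocol direction is needed (the converse you mention parenthetically plays no role here).

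The gap is in the one step that carries the actual content, and you cannot close it by invoking \cite{mp/FaenzaFGT15}, because that correspondence \emph{is} the substance of the theorem being proved; deferring it to the citation collapses your proof to the same black-box appeal the paper already makes. Concretely, your protocol is not executable: after Alice samples $K$ with probability $T_{iK}/\alpha_i$ and sends $K$, the value $\alpha_i U_{Kj}$ can be emitted by neither player --- Bob knows $K$ and $j$ but not $\alpha_i$, while Alice knows $K$ and $\alpha_i$ but not $j$, hence not $U_{Kj}$. (Your own diagnosis, that the emitter must know ``$K$ and $\alpha_i$'', misstates the clash: it is between Alice's private scalar $\alpha_i$ and Bob's private column index $j$.) The standard repair is rejection sampling instead of exact sampling: set $\mu_k=\max_i T_{ik}$, $T'_{ik}=T_{ik}/\mu_k\leqslant 1$ and $U'_{kj}=\mu_k U_{kj}$, so that $M=T'U'$; Alice draws $k$ uniformly from $[r]$ and, with probability $T'_{ik}$, sends $k$, whereupon Bob outputs $r\,U'_{kj}$; with the complementary probability she outputs $0$ herself and the round ends. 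Then $\BE[X_{ij}]=\sum_{k=1}^{r}\tfrac{1}{r}\,T'_{ik}\cdot r\,U'_{kj}=M_{ij}$, every private quantity stays with its owner, and exactly $\lceil\log_2 r\rceil$ bits cross. Finally, a small arithmetic point: $\lceil\log_2 r\rceil\geqslant c$ yields only $r>2^{c-1}$, not $r\geqslant 2^c$ (consider $r=2^c-1$); this factor-of-two slack is inherent in the theorem as restated in the paper and is harmless for its $2^{\Omega(n)}$ applications, but your closing inference ``hence $r\geqslant 2^c$'' is not valid as written.
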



\section{Relaxations of \qappolytope{n}}\label{sec:relaxations}
In the following, $Y$ is a $n^2\times n^2$ variable matrix that is used to denote an arbitrary point in \qappolytope{n}. Further, $Y_{ij,kl}$
refers to $Y(n*(i-1)+j,n*(k-1)+l)$.

The most general family of valid inequalities that includes all known families as special cases is the following:

\medskip
\noindent\textbf{QAP\ref{qap5}:}\qaplabel{qap5} 
\vspace*{-1.5\baselineskip}\begin{equation}\label{qap5ineq}
\displaystyle\sum_{ijkl}n_{ij}n_{kl}Y_{ij,kl} - (2\beta-1)\sum_{ij}n_{ij}Y_{ij,ij}\geqslant \frac{1}{4}-(\beta-\frac{1}{2})^2
\end{equation}
where $\beta\in \mathbb{Z}$ and $n_{ij}\in \mathbb{Z}$ for all $i,j\in[n]$.
These inequalities were introduced in \cite{AuroraM18} as a generalization of all known facet-defining inequalities for the QAP-polytope. 

\medskip\noindent\textbf{QAP\ref{qap1}:}\qaplabel{qap1}
\vspace*{-1.5\baselineskip}\begin{equation}\label{qap1ineq}
\begin{tabular}{rcl}
$\displaystyle\sum_{r=1}^{m}Y_{i_rj_r,kl} - Y_{kl,kl}- \sum_{r < s}Y_{i_rj_r,i_sj_s}$ & $\leqslant$ & $0$
\end{tabular}
\end{equation}

where $i_1,\dots,i_m,k$ are all distinct and $j_1,\dots,j_m,l$ are also
distinct. In addition, $n\geqslant 6, m\geqslant 3$.
These inequalities were introduced in \cite{AuroraM18} and proved to be a special case of QAP\ref{qap5} that are facet-defining for the QAP-polytope. 

\noindent\textbf{QAP\ref{qap2}:}\qaplabel{qap2} 
\vspace*{-0.8\baselineskip}\begin{equation}\label{qap2ineq}
\begin{tabular}{rcl}
$\displaystyle(\beta-1)\sum_{(ij)\in P\times Q}Y_{ij,ij}-\sum_{\substack{(ij),(kl)\in P\times Q\\i<k}}Y_{ij,kl}$ & $\leqslant$ & $\frac{\beta^2-\beta}{2}$
\end{tabular}
\end{equation}
where $P,Q\subset[n]$. In addition (i) $\beta +1 \leqslant\lvert P\rvert,\lvert Q\rvert\leqslant n-3$, (ii) $\lvert P\rvert+\lvert Q\rvert\leqslant 
n-3+\beta$, (iii) $\beta\geqslant2$.
These inequalities were introduced by J\"unger and Kaibel \cite{JungerK2001,KaibelPhD97} who also proved that they are facet-defining for the QAP-polytope. They are also a special case of QAP\ref{qap5} \cite{AuroraM18}.

\noindent\textbf{QAP\ref{qap3}:}\qaplabel{qap3} 
\begin{equation}\label{qap3ineq}
\begin{tabular}{rrrrrrl}
$-$ & $\displaystyle(\beta-1)\sum_{(ij)\in P_1\times Q}Y_{ij,ij}$ & $+$ & $\displaystyle\beta \sum_{(ij)\in P_2\times Q}Y_{ij,ij} $ & 
$+$ & $\displaystyle\sum_{\substack{i<k \\ (ij),(kl)\in P_1\times Q}}Y_{ij,kl}$ \\ 
$+$ &$\displaystyle\sum_{\substack{i<k \\ (ij),(kl)\in P_2\times Q}} Y_{ij,kl}$ & 
$-$ & $\displaystyle\sum_{\substack{(ij)\in P_1\times Q \\ (kl)\in P_2\times Q}}Y_{ij,kl}$ & $-$ & $ \frac{\beta^2-\beta}{2}$ & $\geqslant 0$
\end{tabular}
\end{equation}
where $P_1,P_2,Q\subset[n],P_1\cap P_2=\emptyset$. Further, (i) $3\le|Q|\le n-3$, (ii) $|P_1|+|P_2|\le n-3$, (iii) $|P_1|\geqslant\min\{
2,\beta+1\}$, (iv) $|P_2|\geqslant\min\{1,-\beta+2\}$, (v) $\lvert|P_1|-|P_2|-\beta\rvert\le n-|Q|-4$, (vi) if $|P_2|=1$: 
$|Q|\geqslant\min\{-\beta+5,\beta+2\}$; if $|P_2|\geqslant2$: $|Q|\geqslant\min\{-\beta+5,\beta+3\}$ or $|Q|\geqslant\min\{-\beta+4,\beta+4\}$.
These inequalities were also introduced by Kaibel \cite{KaibelPhD97} and shown to be facet-defining for the QAP-polytope. They are 
also a special case of QAP\ref{qap5}.

\medskip

It is known \cite{AuroraM18} that the inequalities QAP\ref{qap1} , QAP\ref{qap2} and QAP\ref{qap3} are special instances of 
the QAP\ref{qap5} inequalities. QAP\ref{qap5} inequalities are in general not facet-defining for the QAP-polytope and so it is interesting to identify conditions under which they do define facets. We identify a new special case (\ref{qap4ineq}) of QAP\ref{qap5} inequality and show in Section \ref{sec:newfacets} that they are facet-defining. Inequality QAP\ref{qap4}
follows from QAP\ref{qap5} by setting $\beta=2$, $n_{i_1j_1}=n_{i_2j_2}=\dots =n_{i_m,j_m}=1$ for distinct $i_1,\dots,i_m\in[n]$ 
and distinct $j_1,\dots,j_m\in[n]$, and $n_{ij}=0$ for $i\in[n]\setminus\{i_1,\ldots,i_m\}$ or $j\in[n]\setminus\{j_1,\ldots,j_m\}$.

\medskip
\noindent\textbf{QAP\ref{qap4}:}\qaplabel{qap4} 
\vspace*{-1.5\baselineskip}\begin{equation}\label{qap4ineq}
\begin{tabular}{rcl}
$\displaystyle\sum_{r=1}^{m}Y_{i_rj_r,i_rj_r}-\sum_{r < s}Y_{i_rj_r,i_sj_s}$ & $\leqslant$ & $1$
\end{tabular}
\end{equation}
where $i_1,\dots,i_m$ are all distinct and $j_1,\dots,j_m$ are also
distinct. In addition, $m,n\geqslant 7$. These inequalities are new and we discuss them in the following section.


\medskip
\section{A new class of facet-defining inequalities}
\label{sec:newfacets}

In this section we prove that the inequalities QAP\ref{qap4} are facet-defining. Let $S_k$ denote the set of 
those vertices of $\qappolytope{n}$ that correspond to the permutations having $i_r\mapsto j_r$ for $r\in\{r_1,r_2,\ldots,r_k\}\subseteq[m]$ 
and $i_r\not\mapsto j_r$ for $r\in[m]\setminus\{r_1,r_2,\ldots,r_k\}$. Here $i_r,j_r,m$ are as in the definition of QAP\ref{qap4}. 
If $V$ is the set of all the vertices of $\qappolytope{n}$ then clearly $V=\cup_{i=0}^{m}S_i$, and $S_i\cap S_j=\emptyset$ for all 
$i\neq j\in\{0,1,\ldots,m\}$.
\begin{lemma}\label{lem:s1s2eqn}
	The sets $S_1,S_2$ together constitute the vertices that satisfy the inequality (\ref{qap4ineq}) with equality.
\end{lemma}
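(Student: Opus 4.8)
The plan is to evaluate the left-hand side of (\ref{qap4ineq}) at an arbitrary vertex $P^{[2]}_\sigma$ and to show that, as a function of how many of the designated pairs $\sigma$ realizes, it equals $1$ precisely when that count is $1$ or $2$. First I would recall that at the vertex $P^{[2]}_\sigma$ we have $Y_{ij,kl}=P_\sigma(i,j)\cdot P_\sigma(k,l)$, and since each entry of a permutation matrix lies in $\{0,1\}$, the diagonal term $Y_{i_rj_r,i_rj_r}=P_\sigma(i_r,j_r)$ is the indicator of the event that $\sigma$ realizes $i_r\mapsto j_r$, while the cross term $Y_{i_rj_r,i_sj_s}=P_\sigma(i_r,j_r)\cdot P_\sigma(i_s,j_s)$ is the indicator that $\sigma$ realizes both $i_r\mapsto j_r$ and $i_s\mapsto j_s$.

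Next I would set $k=\lvert\{r\in[m]:\sigma\text{ realizes }i_r\mapsto j_r\}\rvert$, so that by definition $\sigma\in S_k$. With the identification above, the first sum in (\ref{qap4ineq}) simply counts the realized pairs and equals $k$, while the second sum counts the unordered index pairs $\{r,s\}$ both of whose members are realized, and hence equals $\binom{k}{2}$. Consequently the left-hand side of (\ref{qap4ineq}) evaluated at $P^{[2]}_\sigma$ equals
$$f(k):=k-\binom{k}{2}=\frac{k(3-k)}{2},$$
which depends only on $k$.

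It then remains to determine for which integers $k\in\{0,1,\dots,m\}$ one has $f(k)=1$. Rewriting $f(k)=1$ as $(k-1)(k-2)=0$ shows the equality set is exactly $k\in\{1,2\}$; for every other admissible $k$ I would verify $f(k)\leqslant 0<1$, using $f(0)=0$, $f(3)=0$, and $f(k)=\tfrac{k(3-k)}{2}\leqslant 0$ for all $k\geqslant 3$. Combining these, a vertex $P^{[2]}_\sigma$ satisfies (\ref{qap4ineq}) with equality if and only if $k\in\{1,2\}$, i.e.\ if and only if $\sigma\in S_1\cup S_2$, which is the assertion of the lemma. Since the entire argument collapses to this one-variable integer computation, I do not expect a genuine obstacle; the only points demanding care are the correct bookkeeping of the quadratic terms as $\binom{k}{2}$ and the check that $f$ stays \emph{strictly} below $1$ on the whole tail $k\geqslant 3$, not merely at the smallest values of $k$.
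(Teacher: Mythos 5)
Your proof is correct and follows essentially the same route as the paper: both evaluate the left-hand side at a vertex $P^{[2]}_\sigma$ with $k$ realized pairs, reduce it to the one-variable expression $k-\binom{k}{2}$, and check that this equals $1$ exactly when $k\in\{1,2\}$. Your explicit factoring $(k-1)(k-2)=0$ and the verification that $k-\binom{k}{2}\leqslant 0$ for $k\geqslant 3$ is just a slightly more systematic write-up of the paper's case check.
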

\begin{proof}
	Consider a $P^{[2]}_{\sigma}\in S_k$ for $1\leqslant k\leqslant m$. W.l.o.g. let $\sigma(i_1)=j_1,\ldots,\sigma(i_k)=j_k$ 
	and $\sigma(i_r)\neq j_r$ for $r\in\{k+1,\ldots,m\}$. Substituting $Y=P^{[2]}_{\sigma}$ in (\ref{qap4ineq}) we have 
	$\sum_{r=1}^{m}P_{\sigma}(i_r,j_r)-\sum_{r < s}P_{\sigma}(i_r,j_r)\cdot P_{\sigma}(i_s,j_s)=k-\binom{k}{2}$. For $k=0$ we have
	$k-\binom{k}{2}=0<1$, for $k=1,k-\binom{k}{2}=1$, for $k=2,k-\binom{k}{2}=1$ and for $3\leqslant k\leqslant m$ we have
	$k-\binom{k}{2}<1$. Hence a vertex of $\qappolytope{n}$ satisfies the inequality (\ref{qap4ineq}) with equality if and only
	if it belongs to $S_1$ or $S_2$.
	\qed
\end{proof}

Let $S=S_1\cup S_2$. We will show that any vertex in $V\setminus S$ can be expressed as a linear combination of the vertices in $S$ 
and a fixed vertex $P^{[2]}_{\sigma^*}\in S_0$. This will establish that the dimension of the face containing $S$ is one less than 
the dimension of the polytope and hence it must be a facet.

The following lemma from \cite{AuroraM18} provides a useful tool to express certain vertices as a linear combination of others.

\begin{lemma}\label{lem:identity1}\cite[Lemma 16]{AuroraM18}
	Let $k_1,k_2,k_3,x,y\in [n]$ be distinct indices. Let $\Sigma=\{\sigma_1,\ldots\ab,\sigma_6\}$ be a set of permutations of 
	$[n]$ such that $\sigma_i(z)=\sigma_j(z)$ for all $z\in [n]\setminus\{k_1,k_2,k_3\}$ and for every $i,j\in \{1,\dots,6\}$. 
	Further, let $\Sigma'=\{\sigma'_1,\ldots,\sigma'_6\}$ where $\sigma'_i$ is a transposition of $\sigma_i$ on the indices 
	$x,y$, for each $i=1,\dots,6$. Then $\forall\; i,j,k,l\in[n],\ \sum_{\sigma\in\Sigma\cup\Sigma'}sign(\sigma)P^{[2]}_{\sigma}
	(ij,kl)=0$.
\end{lemma}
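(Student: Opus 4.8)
The plan is to uncover and exploit the hidden group structure of $\Sigma\cup\Sigma'$. Because the six \emph{distinct} permutations in $\Sigma$ all agree on $[n]\setminus\{k_1,k_2,k_3\}$, they necessarily realize all $3!$ bijections from $\{k_1,k_2,k_3\}$ onto the common image set $\{a,b,c\}$ that the shared values leave available, with a fixed base permutation $\pi$ agreeing everywhere else. Each $\sigma'_t=\sigma_t\,(x\,y)$ right-composes $\sigma_t$ with the transposition $(x\,y)$, which moves only $x,y\in[n]\setminus\{k_1,k_2,k_3\}$; hence $\Sigma\cup\Sigma'$ is exactly the coset $\pi H$ for the subgroup $H=\mathrm{Sym}\{k_1,k_2,k_3\}\times\mathrm{Sym}\{x,y\}\cong S_3\times S_2$ acting on the domain. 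Writing $\sigma=\pi\gamma$ with $\gamma=(\alpha,\beta)\in H$, multiplicativity of the sign gives $\mathrm{sign}(\sigma)=\mathrm{sign}(\pi)\mathrm{sign}(\alpha)\mathrm{sign}(\beta)$. The one concrete feature of the summand I would record up front is that $P^{[2]}_\sigma(ij,kl)=P_\sigma(i,j)P_\sigma(k,l)=[\sigma(i)=j][\sigma(k)=l]$ depends on $\sigma$ only through the two images $\sigma(i)$ and $\sigma(k)$.

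The core of the proof is a factorization of the resulting character sum. Substituting $\sigma=\pi\gamma$ and abbreviating $j'=\pi^{-1}(j)$, $l'=\pi^{-1}(l)$, the target sum equals $\mathrm{sign}(\pi)\sum_{\alpha\in S_3,\,\beta\in S_2}\mathrm{sign}(\alpha)\mathrm{sign}(\beta)[\gamma(i)=j'][\gamma(k)=l']$. Now each constraint ``$i\mapsto j'$'' and ``$k\mapsto l'$'' is governed by $\alpha$ alone if its source lies in $A:=\{k_1,k_2,k_3\}$, by $\beta$ alone if its source lies in $B:=\{x,y\}$, and is a constant otherwise, since $\gamma$ fixes $[n]\setminus(A\cup B)$. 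The summand therefore splits into an $\alpha$-part times a $\beta$-part times a constant, so the whole expression factors as $\mathrm{sign}(\pi)\,C\cdot F_A\cdot F_B$, where $F_A=\sum_{\alpha\in S_3}\mathrm{sign}(\alpha)(\text{product of the }A\text{-indicators})$, $F_B=\sum_{\beta\in S_2}\mathrm{sign}(\beta)(\text{product of the }B\text{-indicators})$, and $C$ is the product of the constant indicators.

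It then remains to see that in every case at least one of $F_A,F_B$ vanishes, which I would check by distributing the at most two constraints among the blocks. If no constraint has its source in $B$ then $F_B=\sum_{\beta\in S_2}\mathrm{sign}(\beta)=1-1=0$. If at most one constraint has its source in $A$ then $F_A=0$: with none it is $\sum_{\alpha\in S_3}\mathrm{sign}(\alpha)=0$, and with exactly one, of the form $[\alpha(u)=v]$, it is $0$ when $v\notin A$ and otherwise $\sum_{\alpha:\alpha(u)=v}\mathrm{sign}(\alpha)=0$, the two qualifying permutations differing by the transposition of the remaining pair of $A$. The only configuration evading both statements is when \emph{both} constraints have their sources in $A$; but then none lies in $B$, so $F_B=0$ regardless. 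Hence the sum is $0$ for every choice of $i,j,k,l$.

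The only real difficulty I anticipate is bookkeeping rather than ideas: one must treat the degenerate collisions ($i=k$, or $\{i,k\}=\{x,y\}$, or targets $j',l'$ landing outside their blocks) carefully, though each of these merely forces a factor to vanish sooner. If the coset language feels heavy, an equivalent elementary route is to pair $\sigma_t$ with $\sigma'_t=\sigma_t\,(x\,y)$ of opposite sign: the paired difference is manifestly $0$ unless $\{i,k\}$ meets $\{x,y\}$, and the surviving sums collapse to the same two vanishing identities $\sum_{S_3}\mathrm{sign}=0$ and ``fixing one image in $S_3$ sums the signs to $0$.''
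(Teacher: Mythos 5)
Your proposal is correct, but there is an important point about the comparison: this paper never proves Lemma~\ref{lem:identity1} at all --- it is imported as a black box from \cite[Lemma 16]{AuroraM18} --- so there is no in-paper proof to measure your argument against; what you have produced is a self-contained proof of a cited result. Judged on its own merits, the argument is sound and complete. The crucial hypothesis you correctly isolate is that $\Sigma$ is a set of six \emph{distinct} permutations: distinctness plus agreement off $A=\{k_1,k_2,k_3\}$ forces the six restrictions to $A$ to exhaust all $3!$ bijections onto the common image set, so $\Sigma=\pi\,\mathrm{Sym}(A)$ for any fixed $\pi\in\Sigma$; and since $x,y\notin A$, right-multiplication by $(x\,y)$ commutes with $\mathrm{Sym}(A)$, giving $\Sigma\cup\Sigma'=\pi H$ with $H=\mathrm{Sym}(A)\times\mathrm{Sym}\{x,y\}$ (the twelve elements are indeed distinct, since each $\sigma'_t$ sends $x\mapsto\pi(y)\neq\pi(x)$, so the coset sum and the sum in the lemma coincide). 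The factorization $\mathrm{sign}(\pi)\,C\cdot F_A\cdot F_B$ is legitimate because each of the two indicators $[\gamma(i)=\pi^{-1}(j)]$, $[\gamma(k)=\pi^{-1}(l)]$ depends on $\alpha$ alone, on $\beta$ alone, or on neither; and your case split is exhaustive: if both sources $i,k$ lie in $A$ then $F_B=1-1=0$, and otherwise at most one constraint involves $\alpha$, whence $F_A=0$ (either as the plain sign sum over $S_3$, or because the two elements of $\mathrm{Sym}(A)$ with a prescribed single image differ by a transposition). The degenerate situations ($i=k$, or a target outside its block) only make a factor vanish earlier, as you say. One stylistic remark: your closing ``elementary route'' of pairing $\sigma_t$ with $\sigma'_t$ of opposite sign is the style of verification one would expect in the source paper, but the coset/character-sum formulation is cleaner, and it makes transparent exactly which numerical facts are used ($\sum_{S_2}\mathrm{sign}=0$ and that sign sums over $S_3$ vanish even after fixing one image), which is why the identity tolerates the two constraints imposed by the pair $(ij,kl)$.
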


The next lemma shows that the vertices in the sets $S_4,\ab\ldots,S_m$ can be expressed as a linear combination of the 
vertices in the sets $S_0,\ldots,S_3$. 

In what follows, when it is clear from the context, we use $\sigma$ to refer to 
a vertex $P^{[2]}_{\sigma}$.

\begin{lemma}\label{lem:skasnxt4}
	For $k\geqslant 4$, any vertex in $S_k$ can be expressed as a linear combination of vertices in $S_{k-1},S_{k-2},S_{k-3},$ 
	and $S_{k-4}$.
\end{lemma}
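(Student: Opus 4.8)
The plan is to prove the statement for each vertex $P^{[2]}_\sigma \in S_k$ separately by producing a single linear identity among twelve vertices in which $P^{[2]}_\sigma$ is the only term lying in $S_k$ and every other term lies in $S_{k-1}\cup S_{k-2}\cup S_{k-3}\cup S_{k-4}$. The tool is Lemma~\ref{lem:identity1}, which for any five distinct indices $k_1,k_2,k_3,x,y$ and any six permutations $\Sigma$ agreeing with $\sigma$ off $\{k_1,k_2,k_3\}$ yields $\sum_{\tau\in\Sigma\cup\Sigma'}\mathrm{sign}(\tau)P^{[2]}_\tau=0$. Since the coefficient of $P^{[2]}_\sigma$ is $\mathrm{sign}(\sigma)\neq 0$, we may solve for $P^{[2]}_\sigma$ as a $\pm 1$-combination of the remaining eleven vertices; the whole content is to choose $k_1,k_2,k_3,x,y$ so that those eleven vertices fall into the required four layers. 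Relabelling, assume $\sigma(i_r)=j_r$ for $1\le r\le k$ and $\sigma(i_r)\neq j_r$ for $r>k$, and set $k_1=i_1,\ k_2=i_2,\ k_3=i_3,\ x=i_4$ (all agreeing positions, which exist precisely because $k\ge 4$).

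First I would track the number of diagonal agreements across $\Sigma$. The six permutations of $\Sigma$ place the values $\{j_1,j_2,j_3\}$ on the positions $\{i_1,i_2,i_3\}$ in all six ways while fixing everything else, so their agreement counts differ only on these three positions: the identity arrangement keeps all three agreements and is $P^{[2]}_\sigma\in S_k$; each of the three transpositions retains exactly one, landing in $S_{k-2}$; each of the two $3$-cycles retains none, landing in $S_{k-3}$ (here I use that $j_1,j_2,j_3$ are distinct). Thus the five non-target members of $\Sigma$ already lie in $S_{k-2}\cup S_{k-3}$.

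Next I would choose the index $y$ so that passing from $\Sigma$ to $\Sigma'$ (swapping the values at $x=i_4$ and $y$) lowers every agreement count by exactly one. A short window calculation shows that this shift $\delta=1$ is forced: the images must lie in $[k-4,k-1]$, while the identity term maps to $k-\delta$ and the $3$-cycle terms map to $k-3-\delta$, so $\delta\ge 1$ and $\delta\le 1$. To realise $\delta=1$, pick $y\in[n]\setminus\{i_1,i_2,i_3,i_4\}$ that is either a non-agreeing index $i_r$ with $r>k$ or an index outside $\{i_1,\dots,i_m\}$ altogether; since $i_4$ carries the value $j_4$ in every member of $\Sigma$, swapping it with the neutral value at $y$ destroys the agreement at $i_4$ and, by distinctness of the $j_r$, creates none at $i_4$ or $y$. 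Consequently the six members of $\Sigma'$ land in $S_{k-1}$ (identity), $S_{k-3}$ (transpositions) and $S_{k-4}$ ($3$-cycles). All eleven non-target vertices therefore lie in $S_{k-1}\cup S_{k-2}\cup S_{k-3}\cup S_{k-4}$, and solving the identity for $P^{[2]}_\sigma$ completes the argument.

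The main obstacle is guaranteeing the neutral partner $y$. The number of admissible choices is exactly $n-k$ (the $m-k$ non-agreeing $i$-indices plus the $n-m$ free indices), which is positive whenever $k\le n-1$; this already covers the extreme $k=m<n$, where free indices still exist, as well as every case with $m=n,\ k<n$. The one genuinely exceptional configuration is $k=m=n$: then every index agrees, no neutral $y$ exists, and indeed $\delta=1$ is unattainable because swapping any two agreeing positions loses two agreements at once, pushing a $3$-cycle image down to $S_{k-5}$. I would treat this lone case by a dedicated argument, either by excluding $m=n$ or by handling the unique all-agreeing vertex through a separate identity, and I expect this boundary to be the only delicate point. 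The hypotheses $k\ge 4$ (so that $i_1,\dots,i_4$ all agree and $S_{k-4}$ is defined) and $n\ge 7\ge 5$ (so that five distinct indices exist for Lemma~\ref{lem:identity1}) are exactly what the construction requires.
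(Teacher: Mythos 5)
Your construction is exactly the paper's proof: the same placement of $k_1,k_2,k_3,x$ on four agreeing indices, the same neutral $y$ (a non-agreeing $i_r$ or an index outside $\{i_1,\ldots,i_m\}$), the same layer classification (from $\Sigma$: the target in $S_k$, three transpositions in $S_{k-2}$, two $3$-cycles in $S_{k-3}$; from $\Sigma'$: one vertex in $S_{k-1}$, three in $S_{k-3}$, two in $S_{k-4}$), and the same final step of solving the signed identity of Lemma~\ref{lem:identity1} for the target vertex. In the generic case your argument is correct and complete.

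The boundary case you flag, $k=m=n$, is a genuine issue --- and it is one the paper's own proof also overlooks: there the required $y$ (either $y\neq i_p$ for all $p\in[m]$, or $y=i_p$ with $\sigma_1(i_p)\neq j_p$) simply does not exist, because every index is an agreeing $i_p$. As you observe, swapping two agreeing positions loses two agreements at once, so the swapped $3$-cycles of $\Sigma'$ land in $S_{k-5}$, and no single application of Lemma~\ref{lem:identity1} can avoid this. You defer this case to ``a dedicated argument'' without supplying one, so strictly speaking your proof (like the paper's) establishes the lemma only when $k<m$ or $m<n$. Two remarks on that remaining case: first, when $k=m=n$ the set $S_{n-1}$ is empty, so the lemma's literal claim is that the unique all-agreeing vertex lies in the span of $S_{n-2}\cup S_{n-3}\cup S_{n-4}$, which neither proof shows; second, the weaker conclusion your construction does yield there (a combination using vertices down to $S_{k-5}$) is all that the induction in the proof of Theorem~\ref{thm:newfacets} actually needs, so the facet-defining result is unaffected provided the lemma is restated to allow $S_{k-5}$ in this single exceptional configuration (or the case $m=n$ is excluded).
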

\begin{proof}
	Let $\sigma_1\in S_k$. Since $k$ is at least $4$, we must have indices $i_{r_1},i_{r_2},i_{r_3},i_{r_4},r_k\ab\in[m],k=1,\ldots,
	4$, such that $\sigma_1(i_{r_k})=j_{r_k},k=1,\ldots,4$. Let $k_1=i_{r_1},k_2=i_{r_2},k_3=i_{r_3},x=i_{r_4}$, where $k_1,k_2,
	k_3,x$ are as defined in Lemma \ref{lem:identity1}. Applying Lemma \ref{lem:identity1} with $y$ chosen as an index such that
	either $y\neq i_p$ for any $p\in[m]$ or when $y=i_p,p\in[m]$ then $\sigma_1(i_p)\neq j_p$, we get $\sigma'_1\in S_{k-1},
	\sigma_2,\sigma_3,\sigma_6\in S_{k-2},\sigma_4,\sigma_5,\sigma'_2,\sigma'_3,\sigma'_6\in S_{k-3},\sigma'_4,\sigma'_5\in 
	S_{k-4}$ with the property that $\sigma_1$ is a linear combination of $\sigma_2,\ldots,\sigma_6,\sigma'_1,\ldots,\sigma'_6.$
	\qed
\end{proof}

Next, we show that vertices in $S_3$ can be expressed as  linear combinations of vertices in $S$ and $S_0$ as well.

\begin{lemma}\label{lem:s3ss0}
	Any vertex in $S_3$ can be expressed as a linear combination of vertices in $S$ and $S_0$.
\end{lemma}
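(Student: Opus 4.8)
The plan is to reduce the case of a vertex $\sigma_1\in S_3$ to the already-handled cases $S_0,\ldots$ by invoking Lemma~\ref{lem:identity1} just as in the proof of Lemma~\ref{lem:skasnxt4}, but now the bookkeeping is more delicate because starting from three fixed coincidences $\sigma_1(i_{r_1})=j_{r_1},\sigma_1(i_{r_2})=j_{r_2},\sigma_1(i_{r_3})=j_{r_3}$ a single transposition can only drop the count, and I must ensure every permutation produced by the identity lands in $S$ or $S_0$ rather than in $S_3$ again. First I would set $k_1=i_{r_1},k_2=i_{r_2},k_3=i_{r_3}$ and choose the transposition pair $x,y$ carefully: I pick $x=i_{r_3}$ (so that the transposition acts on one of the three ``good'' indices) and choose $y$ to be an index that is neutral, i.e.\ either $y\neq i_p$ for all $p\in[m]$, or $y=i_p$ with $\sigma_1(i_p)\neq j_p$. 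With such a choice the transposition on $\{x,y\}$ destroys the coincidence at $i_{r_3}$ without creating any new coincidence, so the primed permutations each have one fewer good index among $\{r_1,r_2,r_3\}$ than their unprimed partners.

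Next I would track, for each of the six unprimed permutations $\sigma_1,\ldots,\sigma_6$ and their transposes $\sigma_1',\ldots,\sigma_6'$, exactly which of the three fixed coincidences survive. Since Lemma~\ref{lem:identity1} only alters the values on $\{k_1,k_2,k_3\}=\{i_{r_1},i_{r_2},i_{r_3}\}$, and on the transposed pair $\{x,y\}=\{i_{r_3},y\}$, I can read off the membership of each of the twelve permutations in the stratification $\{S_j\}$ by counting how many of the three conditions $\sigma(i_{r_t})=j_{r_t}$ hold. The key point is that the six permutations in $\Sigma$ are obtained from $\sigma_1$ by the $3!$ rearrangements of the values at $k_1,k_2,k_3$, so at most one of them (namely $\sigma_1$ itself) keeps all three coincidences, while the other five each break at least one, landing them in $S_2$ or $S_1$ or $S_0$; applying the transposition then further lowers the good-index count, so all of $\sigma_1',\ldots,\sigma_6'$ and all of $\sigma_2,\ldots,\sigma_6$ lie in $S_0\cup S_1\cup S_2 = S_0\cup S$. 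Rearranging the identity $\sum_{\sigma\in\Sigma\cup\Sigma'}\mathrm{sign}(\sigma)P^{[2]}_\sigma(ij,kl)=0$ to isolate $P^{[2]}_{\sigma_1}$ then expresses $\sigma_1$ as a linear combination of vertices all lying in $S\cup S_0$, which is exactly the claim.

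The main obstacle I anticipate is the feasibility of choosing the neutral index $y$: I must guarantee that such a $y$ exists and that none of the five rearranged permutations $\sigma_2,\ldots,\sigma_6$ accidentally re-create a coincidence at an index outside $\{r_1,r_2,r_3\}$ and thereby climb back into $S_3$ (or higher). Because Lemma~\ref{lem:identity1} only modifies values on $\{k_1,k_2,k_3,x,y\}$, the only way a spurious coincidence could arise is through the value newly assigned to $y$; here the hypothesis $n,m\geqslant 7$ from the definition of QAP\ref{qap4} is what gives me enough room to pick $y$ avoiding all $m$ of the indices $i_1,\ldots,i_m$ (or, if forced onto some $i_p$, with $\sigma_1(i_p)\neq j_p$ already), so the neutral choice is always available. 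Once $y$ is fixed neutrally, the value landing at $y$ under any $\sigma\in\Sigma$ is one of $\sigma_1(k_1),\sigma_1(k_2),\sigma_1(k_3)=j_{r_1},j_{r_2},j_{r_3}$, and since none of these equals $\sigma_1(y)$, no new good coincidence is introduced. I would close by noting that the nonzero coefficient on $P^{[2]}_{\sigma_1}$ (its sign in the identity) lets me solve for it, completing the proof.
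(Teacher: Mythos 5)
Your strategy is the right one in spirit (it is also the paper's): apply Lemma~\ref{lem:identity1} so that the transposition destroys the coincidence at one of the three good indices, then verify that the other eleven permutations all land in $S\cup S_0$ and solve the resulting identity for $P^{[2]}_{\sigma_1}$. However, your instantiation of the lemma is not legal: you set $k_3=i_{r_3}$ \emph{and} $x=i_{r_3}$, while Lemma~\ref{lem:identity1} requires $k_1,k_2,k_3,x,y$ to be five \emph{distinct} indices. This is not a removable technicality: the cancellation in the lemma between each $\sigma_t$ and its transposed partner $\sigma'_t$ relies on the transposition not touching the rows $k_1,k_2,k_3$ of the permutation matrices, and it genuinely fails when $x=k_3$. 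Concretely, take $n=4$, $k_1=1$, $k_2=2$, $k_3=x=3$, $y=4$, and $\sigma_1=\mathrm{id}$, so that $\Sigma$ consists of the six permutations of the values $\{1,2,3\}$ on positions $\{1,2,3\}$ and $\Sigma'$ is obtained by swapping the values at positions $3$ and $4$. Then $\sigma_1$ is the \emph{only} one of the twelve permutations with $\sigma(1)=1$ and $\sigma(3)=3$, so
\begin{equation*}
\sum_{\sigma\in\Sigma\cup\Sigma'}\mathrm{sign}(\sigma)\,P^{[2]}_{\sigma}(11,33)=1\neq 0 ,
\end{equation*}
and the identity you want to rearrange to isolate $P^{[2]}_{\sigma_1}$ simply does not hold under your choice of indices.

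The repair is exactly the paper's index assignment: put only two of the good indices among the $k$'s, namely $k_1=i_{r_1}$, $k_2=i_{r_2}$, let the third good index serve as the transposition index $x=i_{r_3}$, and choose \emph{both} $k_3$ and $y$ from $[n]\setminus\{i_{r_1},i_{r_2},i_{r_3}\}$. Now all five indices are distinct, every $\sigma_t$ with $t\geqslant 2$ keeps the coincidence at $i_{r_3}$ and breaks at least one of the other two (no spurious coincidence can appear at $k_3$ or $y$ because the $j_r$ are distinct, which is the argument you already sketched), so $\sigma_2,\ldots,\sigma_6\in S_1\cup S_2=S$; the transposition at $(i_{r_3},y)$ then removes exactly the coincidence at $i_{r_3}$, giving three primed permutations in $S$ and three in $S_0$. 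With that assignment the rest of your bookkeeping goes through essentially verbatim. Note also that your appeal to $m\geqslant 7$ is unnecessary here: $y$ need not avoid all of $i_1,\ldots,i_m$, only $i_{r_1},i_{r_2},i_{r_3}$, since at a neutral index $i_p$ the value arriving there is one of $j_{r_1},j_{r_2},j_{r_3}$, which can never equal $j_p$.
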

\begin{proof}
	Let $\sigma_1\in S_3$. So we have indices $i_{r_1},i_{r_2},i_{r_3},r_k\in[m],k=1,2,3$, such that $\sigma_1(i_{r_k})\ab=
	j_{r_k},k=1,2,3$. Let $k_1=i_{r_1},k_2=i_{r_2},x=i_{r_3}$, where $k_1,k_2,x$ are as defined in Lemma \ref{lem:identity1}.
	Applying Lemma \ref{lem:identity1} with $k_3,y$ chosen arbitrarily from $[n]\setminus\{i_{r_1},i_{r_2},i_{r_3}\}$, we get 
	$\sigma_2,\ldots,\sigma_6,\sigma'_1,\sigma'_2,\sigma'_6\in S,\sigma'_3,\sigma'_4,\sigma'_5\in S_0$ with the property that $\sigma_1$ is a linear combination of $\sigma_2,\ldots,\sigma_6,\sigma'_1,\ldots,\sigma'_6$.
	\qed
\end{proof}

Now that we have established that the linear hull of $S_0,S_1,S_2$ equals the linear hull of the QAP-polytope, the only remaining task is to show that instead of the entire set $S_0$, a fixed vertex in $S_0$ suffices to generate the entire linear hull. We will show that in fact any arbitrary vertex in $S_0$ is sufficient. We do this by first showing that permutations in $S_0$ define a connected graph if the edges connect permutations that are one transposition apart. Then, we show that any vertex $\sigma\in S_0$ is sufficient to generate all vertices in the connected component of $\sigma$ by linear combination with vertices in $S_1, S_2$. 

In the following lemma we show that it is possible to obtain the permutation corresponding to a vertex in $S_0$ from the permutation
corresponding to any other vertex in $S_0$ via transpositions such that the vertices corresponding to the intermediate permutations
also lie in $S_0$.

\begin{lemma}\label{lem:szeroconn}
	Consider the graph $G=(S_0,E)$ where $S_0$ is the set of vertices of the QAP-polytope that correspond to permutations for 
	which $i_r\not\mapsto j_r$ for all $r\in[m]$ and $\{P^{[2]}_{\sigma_1},P^{[2]}_{\sigma_2}\}\in E$ for some 
	$P^{[2]}_{\sigma_1},P^{[2]}_{\sigma_2}\in S_0$ if $\sigma_1$ and $\sigma_2$ are transpositions of each other. Then 
	$G$ is connected.
\end{lemma}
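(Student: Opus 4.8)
The plan is to connect any $P^{[2]}_{\sigma}\in S_0$ to a single fixed target $P^{[2]}_{\sigma^*}\in S_0$ by a walk in $G$, i.e. a sequence of transpositions each of which stays inside $S_0$; since being connected to a common vertex is transitive, this proves connectivity. It is convenient to first normalize the constraint. As the $i_r$ are distinct and the $j_r$ are distinct, relabelling the domain by any $\alpha$ with $\alpha(i_r)=r$ and the range by any $\beta$ with $\beta(j_r)=r$ makes $\sigma\mapsto\beta\sigma\alpha^{-1}$ an automorphism of $G$ (one checks $\sigma\circ(x\,y)$ is sent to $\phi(\sigma)\circ(\alpha(x)\,\alpha(y))$, so edges go to edges), and under it $S_0$ becomes $\{\sigma\mid\sigma(r)\neq r\text{ for all }r\in[m]\}$, the permutations with no fixed point among the first $m$ positions. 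I will work with this cleaner description and take as target the $m$-cycle $\sigma^*=(1\,2\,\cdots\,m)$ (identity on $\{m+1,\dots,n\}$), which clearly lies in $S_0$.

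To reach $\sigma^*$ from a given $\sigma\in S_0$ I would run a greedy correction argument measured by the number of positions on which $\sigma$ and $\sigma^*$ agree. If $\sigma\neq\sigma^*$, choose a position $t$ with $\sigma(t)\neq\sigma^*(t)$, set $b=\sigma^*(t)$ and let $s=\sigma^{-1}(b)$ be the current location of $b$. Swapping the values at positions $t$ and $s$ fixes position $t$; the only other position touched is $s$, which was already disagreeing (since $\sigma(s)=b=\sigma^*(t)\neq\sigma^*(s)$), so the number of agreements strictly increases. Iterating therefore terminates at $\sigma^*$. The sole obstruction is that this swap may leave $S_0$: placing $b$ at $t$ is harmless because $b=\sigma^*(t)\neq t$, so the only failure is when placing $a:=\sigma(t)$ at $s$ creates a forbidden fixed point, i.e. $s\leqslant m$ and $a=s$.

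In that obstructed case I would replace the single swap by a two–transposition detour through a spare position $u$: swap $s$ with $u$ and then $t$ with $u$, with net effect $b\mapsto t$, $a\mapsto u$, $\sigma(u)\mapsto s$. Both intermediate permutations lie in $S_0$ as long as $u\notin\{t,s\}$, $u$ is not the position whose index equals the value $b$, and $\sigma(u)\neq s$; the last condition is automatic because the value $s=a$ sits at position $t\neq u$. Choosing $u$ among the positions where $\sigma$ and $\sigma^*$ already disagree keeps the detour agreement–increasing, so termination is preserved. The crucial slack here is that in the normalized problem each value is forbidden at a single position and each constrained position forbids a single value, so only a bounded number of candidate $u$ are excluded.

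The main obstacle I anticipate is the genuinely rigid boundary case in which $\sigma$ and $\sigma^*$ differ only by cyclically interchanging the images of three forbidden positions whose value set is exactly those three positions — the analogue of two opposite $3$–cycles on three points, where every single transposition reintroduces a fixed point and no agreement–increasing $u$ is available. I would escape this by abandoning monotonicity for two moves and routing one value temporarily into a free position outside $\{1,\dots,m\}$ (or, when $m=n$, into a fourth constrained position whose current value is not among the three), which destroys the parity obstruction; connectivity only needs a walk, not a monotone one. Establishing that such a spare position always exists — in particular for the tight ranges $m\in\{n-2,n-1,n\}$ with $t$ close to $n$ — is exactly where the hypothesis $n\geqslant 7$ (equivalently $m,n\geqslant 7$) is consumed, and verifying it in every boundary configuration is the most delicate part of the argument.
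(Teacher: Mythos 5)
Your basic mechanism --- normalize so that $S_0$ becomes the set of permutations with no fixed point among the first $m$ positions, then walk to a canonical target by transpositions, replacing a direct swap by a two-swap detour whenever it would create a forbidden fixed point --- is sound, and it is essentially the paper's own strategy: the paper likewise fixes the constrained positions one at a time against a second element of $S_0$, using one- or two-swap moves that stay inside $S_0$, and finishes by noting that the permutations of the remaining $n-m$ positions are connected by transpositions. Your verification of the direct step and of the detour conditions (including that $\sigma(u)\neq s$ comes for free) is correct.

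The gap is your claim that the only configuration admitting no agreement-increasing move is the rigid ``two opposite $3$-cycles'' case, where the value set of the three disagreeing positions equals that position set. What your analysis actually shows is that the monotone detour fails whenever the disagreement set $D$ equals $\{t,s,b\}$; nothing forces the value set to be $\{t,s,b\}$. Concretely, take $m=n=7$, $\sigma^*=(1\,2\,3\,4\,5\,6\,7)$, and let $\sigma$ agree with $\sigma^*$ except that $\sigma(1)=3$, $\sigma(2)=4$, $\sigma(3)=2$. Then $\sigma\in S_0$ and $D=\{1,2,3\}$. If the greedy picks $t=1$, then $b=\sigma^*(1)=2$, $s=\sigma^{-1}(2)=3$ and $a=\sigma(1)=3=s$, so the step is obstructed and no disagreeing spare $u\notin\{t,s,b\}$ exists --- yet this is not your rigid case (the values on $D$ are $\{2,3,4\}$), your escape move as described is not guaranteed to stay in $S_0$ here (swapping position $2$ with position $4$ puts the value $4$ on position $4$), and the correct remedy is simply a different choice of starting position: fixing position $2$ first, i.e.\ swapping positions $1$ and $2$, is a legal monotone move. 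So the dichotomy ``either some choice of $t$ admits a monotone move, or the configuration is rigid'' is precisely what must be proved, quantifying over all starting positions, and the proposal does not do this. Two further loose ends: (i) after a non-monotone escape the agreement count drops, so termination has to be re-argued (one must rule out the algorithm alternating forever between escapes and greedy steps; a short trace showing the rigid case finishes in three monotone swaps after the escape would do it, but it is missing); (ii) the hypothesis $n\geqslant 7$ is not really ``consumed'' where you say --- in the rigid case the value set equals the position set, so any fourth position serves as the spare and $n\geqslant 4$ suffices there; the delicate point is the dichotomy above.
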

\begin{proof}
	Consider an arbitrary vertex $P^{[2]}_{\sigma}\in S_0$ such that $\sigma(i_r)=k_r,k_r\neq j_r$ for all $r\in[m]$.
	Also, consider another vertex $P^{[2]}_{\sigma'}\in S_0$ such that $\sigma'(i_r)=l_r,l_r\neq j_r$ for all $r\in[m]$. We will
	show that there is a path from $P^{[2]}_{\sigma}$ to $P^{[2]}_{\sigma'}$ in $G$. For simplicity, we will use $\sigma$ to
	refer to $P^{[2]}_{\sigma}$. Let $\sigma,\sigma_1,\sigma_2,\ldots,\sigma_t,\sigma'$ be a path of length $t+1$ between 
	$\sigma$ and $\sigma'$. Consider a vertex $\sigma_p,p\in[t]$ such that $\sigma_p(i_r)=l_r$ for all $r\in[s],s<m$. In the
	next step we will extend the path from $\sigma_p$ to some vertex $\sigma_q$ such that $\sigma_q(i_r)=l_r$ for all $r\in[s+1]$.
	If $\sigma_p(i_x)=l_{s+1}$ such that $\sigma_p(i_{s+1})\neq j_x$ then we can swap $\sigma_p(i_x)$ with $\sigma_p(i_{s+1})$
	to get the desired vertex $\sigma_q$. Otherwise, in the first swap we can move $l_{s+1}$ to some index $i_{x'},x\neq x'$, 
	such that $l_{s+1}\neq j_{x'}$ and then in the second swap get $\sigma_p(i_{s+1})$ to map to $l_{s+1}$. Note that both the 
	swaps result in vertices within $S_0$. After the first swap we have $\sigma_{p'}(i_{x'})=l_{s+1}$ and $\sigma_{p'}(i_x)=
	\sigma_p(i_{x'})$ and after the second swap we get $\sigma_q(i_{s+1})=l_{s+1}$ and $\sigma_q(i_{x'})=j_x$, both of which
	avoid a map from $i_x$ to $j_x$ and $i_{x'}$ to $j_{x'}$. In case it is not possible to find a suitable $i_{x'}$ to move
	$l_{s+1}$, it should be possible to move $\sigma_p(i_{s+1})$ instead. Once we have obtained a permutation $\sigma''$ such 
	that $\sigma''(i_r)=l_r$ for all $r\in[m]$, there must exist a path from $\sigma''$ to $\sigma'$ since the set of 
	permutations having $i_r\mapsto l_r$ for all $r\in[m]$, forms a group isomorphic to the symmetric group on $n-m$ elements.
	\qed
\end{proof}

The following lemma gives a sequence of four vertices of $QAP_n$ such that a specific linear combination of these vertices reduces
the number of non-zero entries in the resulting vector to a constant independent of $n$. This lemma will be used crucially in Lemma
\ref{lem:szeroins} to express the difference of two neighboring vertices in $S_0$ in terms of the vertices in $S$.

\begin{lemma}\label{lem:identity2}
	Given a sequence of permutations over the set $[n]$, $\sigma_1,\sigma_2,\sigma_3,\sigma_4$, such that $\sigma_2$ is 
	obtained from $\sigma_1$ by a transposition that swaps the values of $\sigma_1(i),\sigma_1(j)$; $\sigma_3$ is 
	obtained from $\sigma_2$ by a transposition that swaps the values of $\sigma_2(i'),\sigma_2(j')$ ($i'\neq 
	i,j'\neq j$); and $\sigma_4$ is obtained from $\sigma_3$ by a transposition that swaps the values of 
	$\sigma_3(i),\sigma_3(j)$. Then $(P^{[2]}_{\sigma_1}-P^{[2]}_{\sigma_2})-(P^{[2]}_{\sigma_4}-P^{[2]}_{\sigma_3})$ 
	has a number of non-zeroes that is independent of $n$.
\end{lemma}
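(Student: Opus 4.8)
The plan is to work entrywise with the matrix $M := (P^{[2]}_{\sigma_1}-P^{[2]}_{\sigma_2})-(P^{[2]}_{\sigma_4}-P^{[2]}_{\sigma_3}) = P^{[2]}_{\sigma_1}-P^{[2]}_{\sigma_2}+P^{[2]}_{\sigma_3}-P^{[2]}_{\sigma_4}$, i.e.\ with the alternating sign vector $(\varepsilon_1,\varepsilon_2,\varepsilon_3,\varepsilon_4)=(+1,-1,+1,-1)$. Using $P^{[2]}_{\sigma}(ab,cd)=P_\sigma(a,b)\,P_\sigma(c,d)$, the entry of $M$ is $M(ab,cd)=\sum_{t=1}^4 \varepsilon_t\,P_{\sigma_t}(a,b)\,P_{\sigma_t}(c,d)$. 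The first observation is that the four permutations agree at every position outside $U:=\{i,j\}\cup\{i',j'\}$, and since the two transpositions act on the disjoint position pairs $\{i,j\}$ and $\{i',j'\}$ we have $|U|=4$. So I only ever have to track how the four permutations behave on the four positions of $U$.

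The crux is a one-variable (marginal) cancellation: for every position $a$ and every value $b$, $\sum_{t}\varepsilon_t\,P_{\sigma_t}(a,b)=0$. For $a\notin U$ this is immediate, since all four $P_{\sigma_t}(a,b)$ coincide and $\sum_t\varepsilon_t=0$. For $a\in U$ I verify it by pairing the four permutations according to the sign pattern: at $a\in\{i,j\}$ the permutations $\sigma_1,\sigma_4$ take one common value while $\sigma_2,\sigma_3$ take the transposed value, and the sign pairs $\varepsilon_1+\varepsilon_4$ and $\varepsilon_2+\varepsilon_3$ both vanish; symmetrically, at $a\in\{i',j'\}$ it is $\sigma_1,\sigma_2$ that agree and $\sigma_3,\sigma_4$ that agree, again with cancelling signs. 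This pairing is precisely where disjointness of the two transpositions is used, and it is the step I expect to require the most care.

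Given the marginal identity, the bilinear entries collapse quickly. If $a\notin U$ then $P_{\sigma_t}(a,b)$ is independent of $t$, so $M(ab,cd)=P_{\sigma_1}(a,b)\sum_t\varepsilon_t\,P_{\sigma_t}(c,d)=0$; by symmetry $M(ab,cd)=0$ whenever $c\notin U$. Hence a nonzero entry forces $a\in U$ and $c\in U$. There are only $|U|^2=16$ such pairs $(a,c)$, and for each of them $b$ must equal one of the (at most two) values in $\{\sigma_t(a)\}$ and $d$ one of the (at most two) values in $\{\sigma_t(c)\}$, so $M$ has at most $16\cdot 2\cdot 2$ nonzero entries. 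This bound is a constant that does not involve $n$, which proves the claim. The entire argument thus reduces to bookkeeping of the four permutations on the four positions of $U$, with the marginal cancellation as the only nontrivial ingredient.
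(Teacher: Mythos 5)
Your proof is correct and reaches the same conclusion, but it is organized around a different key identity than the paper's proof. The paper works with the two pairwise differences: it explicitly writes down the $\pm 1$ entries of $P^{[2]}_{\sigma_1}-P^{[2]}_{\sigma_2}$ (of which there are $\Theta(n)$), observes that $P^{[2]}_{\sigma_4}-P^{[2]}_{\sigma_3}$ has identical entries except at those involving the positions $i',j'$, and then enumerates the $32$ entries that survive the subtraction ($16$ explicit ones, $16$ more by row/column symmetry). You instead prove the marginal identity $\sum_{t}\varepsilon_t P_{\sigma_t}(a,b)=0$ for every position--value pair $(a,b)$, via the two sign-cancelling pairings ($\sigma_1,\sigma_4$ against $\sigma_2,\sigma_3$ at positions $i,j$; $\sigma_1,\sigma_2$ against $\sigma_3,\sigma_4$ at positions $i',j'$), and then kill every entry of $M$ whose row or column position lies outside $U=\{i,j,i',j'\}$ by factoring the bilinear sum. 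This buys a shorter argument with no entry enumeration, at the price of a looser constant ($\leqslant 64$ rather than exactly $32$) --- which is irrelevant here, since the lemma only asserts a bound independent of $n$. The underlying cancellation mechanism is the same in both proofs; yours packages it more cleanly.

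One remark: you assume the two transposition pairs $\{i,j\}$ and $\{i',j'\}$ are disjoint, which is stronger than the literal hypothesis $i'\neq i$, $j'\neq j$. This is the right reading --- the paper's own step asserting that $P^{[2]}_{\sigma_1}-P^{[2]}_{\sigma_2}$ and $P^{[2]}_{\sigma_4}-P^{[2]}_{\sigma_3}$ differ only at the indices $i',j'$ needs the same disjointness, and the statement is actually false without it: taking $i'=j$ with $j'\notin\{i,j\}$, the entry indexed by $(i\,\sigma_1(i),\,c\,\sigma_1(c))$ equals $1$ for every position $c\notin\{i,j,j'\}$, giving at least $n-3$ nonzeroes. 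So making the disjointness explicit, as you did, is a point in your proof's favor rather than a gap.
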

\begin{proof}
	Recall that $P^{[2]}_{\sigma}(ab,xy)=P_{\sigma}(a,b)\cdot P_{\sigma}(x,y)$. Since $\sigma_1$ and $\sigma_2$ differ at only 
	$i,j$, we have $(P^{[2]}_{\sigma_1}-P^{[2]}_{\sigma_2})(ab,xy)=0$ for all $a,b,x,y\in[n]\setminus\{i,j\}$. Similarly, we 
	have $(P^{[2]}_{\sigma_4}-P^{[2]}_{\sigma_3})(ab,xy)=0$ for all $a,b,x,y\in[n]\setminus\{i,j\}$. One can verify that 
	$(P^{[2]}_{\sigma_1}-P^{[2]}_{\sigma_2})(ab,xy)=1$ for all $x,y$ such that $\sigma_1(x)=y$, when $a=i,b=\sigma_1(i)$ or 
	when $a=j,b=\sigma_1(j)$. Symmetrically, we have $(P^{[2]}_{\sigma_1}-P^{[2]}_{\sigma_2})(ab,xy)=-1$ for all $x,y$ such 
	that $\sigma_2(x)=y$, when $a=i,b=\sigma_2(i)$ or when $a=j,b=\sigma_2(j)$. For the case when $a,b\notin\{i,j\}$, 
	$(P^{[2]}_{\sigma_1}-P^{[2]}_{\sigma_2})(ab,xy)=1$ when $x=i,y=\sigma_1(i)$ or when $x=j,y=\sigma_1(j)$ and 
	$(P^{[2]}_{\sigma_1}-P^{[2]}_{\sigma_2})(ab,xy)=-1$ when $x=i,y=\sigma_2(i)$ or when $x=j,y=\sigma_2(j)$. Similar values 
	follow for $P^{[2]}_{\sigma_4}-P^{[2]}_{\sigma_3}$. Note that $P^{[2]}_{\sigma_1}-P^{[2]}_{\sigma_2}$ and 
	$P^{[2]}_{\sigma_4}-P^{[2]}_{\sigma_3}$ differ only at the indices $i',j'$. So subtracting the latter from the former we 
	get, $((P^{[2]}_{\sigma_1}-P^{[2]}_{\sigma_2})-(P^{[2]}_{\sigma_4}-P^{[2]}_{\sigma_3}))(ab,xy)=0$ for all $a,b,x,y\notin
	\{i,j,i',j'\}$. The only non-zero entries that remain are the following: (i) $a=i,b=\sigma_1(i),x=i',y=\sigma_1(i')$, (ii) 
	$a=i,b=\sigma_2(i),x=i',y=\sigma_2(i')$, (iii) $a=j,b=\sigma_1(j),x=i',y=\sigma_1(i')$, (iv) $a=j,b=\sigma_2(j),x=i',
	y=\sigma_2(i')$, (v) $a=i,b=\sigma_1(i),x=i',y=\sigma_3(i')$, (vi) $a=i,b=\sigma_2(i),x=i',y=\sigma_3(i')$, (vii) $a=j,
	b=\sigma_1(j),x=i',y=\sigma_3(i')$, (viii) $a=j,b=\sigma_2(j),x=i',y=\sigma_3(i')$. Another $8$ non-zero entries correspond 
	to the case when $x=j'$ taking the total to $16$. $16$ more entries follow from symmetry, by swapping $a,b$ with $x,y$. 
	Thus, we get a total of $32$ non-zero entries in the resulting matrix. Half of these are $+1$ and the remaining half are 
	$-1$. Note that these entries depend only on the indices where the four permutations map the indices $i,j,i',j'$ and not on 
	the value of $n$ or where these permutations map the remaining indices.
	\qed
\end{proof}

\begin{lemma}\label{lem:szeroins}
	For any $P^{[2]}_{\sigma}\in S_0$, $P^{[2]}_{\sigma}-P^{[2]}_{\sigma'}$ lies in the linear hull of $S$ for every neighbor
	$P^{[2]}_{\sigma'}\in S_0$, provided $m\geqslant7$.
\end{lemma}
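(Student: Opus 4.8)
The plan is to feed the two neighbouring vertices into Lemma \ref{lem:identity2} as the first two members of a four-term sequence whose last two members can be arranged to lie in $S$. Write $\sigma=\sigma_1$ and $\sigma'=\sigma_2$, and let $i,j$ be the two positions at which they differ, so that $\sigma_2$ is $\sigma_1$ with the transposition swapping the values at $i$ and $j$. Our target $P^{[2]}_{\sigma}-P^{[2]}_{\sigma'}$ is then exactly the first bracket $P^{[2]}_{\sigma_1}-P^{[2]}_{\sigma_2}$ of that lemma, and the remaining freedom is in the choice of the second transposition, at positions $i',j'$.

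I would choose that second transposition to force the two new permutations into $S$. Concretely, pick an index $r\in[m]$ with $i_r\notin\{i,j\}$ and set $i'=i_r$ and $j'=\sigma_2^{-1}(j_r)$. Since $\sigma_2\in S_0$ we have $\sigma_2(i_r)\neq j_r$, so $j'\neq i'$, and a counting argument (at most four indices $r$ are excluded) together with $m\geqslant7$ lets me also keep $j'\notin\{i,j\}$, making $i,j,i',j'$ distinct as required. Then $\sigma_3$, obtained from $\sigma_2$ by swapping the values at $i',j'$, satisfies $\sigma_3(i_r)=\sigma_2(j')=j_r$; and $\sigma_4$, obtained from $\sigma_3$ by swapping the values at $i,j$, equals $\sigma_1$ with the swap at $i',j'$ and hence satisfies $\sigma_4(i_r)=\sigma_1(j')=\sigma_2(j')=j_r$. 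Each of $\sigma_3,\sigma_4$ arises from a permutation in $S_0$ by a single transposition, so it creates at most two maps of the form $i_s\mapsto j_s$; combined with the one just forced, each has exactly one or two such maps and therefore lies in $S_1\cup S_2=S$ by Lemma \ref{lem:s1s2eqn}.

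Applying Lemma \ref{lem:identity2} to $\sigma_1,\sigma_2,\sigma_3,\sigma_4$ now yields $(P^{[2]}_{\sigma_1}-P^{[2]}_{\sigma_2})-(P^{[2]}_{\sigma_4}-P^{[2]}_{\sigma_3})=R$, where $R$ has a number of nonzero entries independent of $n$. Because $\sigma_3,\sigma_4\in S$, the term $P^{[2]}_{\sigma_4}-P^{[2]}_{\sigma_3}$ already lies in the linear hull of $S$, so $P^{[2]}_{\sigma}-P^{[2]}_{\sigma'}$ and the residual $R$ are congruent modulo the linear hull of $S$, and it remains only to place $R$ in that hull.

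The main obstacle is disposing of this residual. Its support is confined to the $O(1)$ coordinates determined solely by the images of $i,j,i',j'$ under the four permutations, and is otherwise insensitive to $n$ and to how the permutations act on the remaining indices. The plan is to exploit precisely this locality: since $m\geqslant7$ leaves several indices $i_s$ untouched by the support of $R$, I would realise the bounded pattern of $R$ as a linear combination of a fixed number of vertices of $S$, constructed as in Lemmas \ref{lem:skasnxt4} and \ref{lem:s3ss0} by invoking the exact cancellation identity of Lemma \ref{lem:identity1} on these spare indices, arranging that every permutation produced keeps exactly one or two maps $i_s\mapsto j_s$ and so stays in $S$. Checking that the local residual is genuinely generated by such $S$-vertices, and that all side constraints (distinctness of indices, and membership in $S_1\cup S_2$ rather than $S_3$ or higher) can be met simultaneously, is exactly where the hypothesis $m\geqslant7$ is consumed and is the technical heart of the argument.
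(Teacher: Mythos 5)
Your first half is sound and in fact coincides with how the paper's own proof begins: taking $\sigma_1=\sigma$, $\sigma_2=\sigma'$ and choosing the second transposition to swap the values at $i_r$ and $\sigma_2^{-1}(j_r)$ is exactly the paper's construction, and your verification that $\sigma_3,\sigma_4\in S_1\cup S_2$ is correct. The gap is everything after that. Lemma \ref{lem:identity2} only says that the residual $R=(P^{[2]}_{\sigma_1}-P^{[2]}_{\sigma_2})-(P^{[2]}_{\sigma_4}-P^{[2]}_{\sigma_3})$ has $O(1)$ support; it gives no representation of $R$ by vertices. Since you already have $\sigma_3,\sigma_4\in S$, the claim $R\in\mathrm{span}(S)$ is \emph{equivalent} to the lemma you are proving, so deferring it to a sketch is not finishing a technical remainder --- it is restating the whole problem. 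Moreover, your proposed tool for it, Lemma \ref{lem:identity1}, is an identity asserting that a signed sum of twelve vertices vanishes; it is not at all clear (and you do not show) how such identities could produce a fixed sparse $\pm 1$ pattern like $R$, whose support is $32$ entries while every vertex $P^{[2]}_{\sigma}$ has support of size $n^2$.

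The idea you are missing is that the paper never expresses $R$ in $\mathrm{span}(S)$ directly: it cancels $R$ instead. It constructs a \emph{second} quadruple $\sigma_5,\sigma_6,\sigma_7,\sigma_8$, related by the same two transpositions as $\sigma_1,\ldots,\sigma_4$, with $\sigma_5$ chosen to agree with $\sigma_4$ on the four indices $i,j,i',j'$. By the last observation in the proof of Lemma \ref{lem:identity2}, the residual depends only on the images of those four indices, so the second quadruple has the same residual, and subtracting the two instances of the identity eliminates $R$ exactly, yielding $P^{[2]}_{\sigma}-P^{[2]}_{\sigma'}$ as a signed combination of the six vertices $\sigma_3,\ldots,\sigma_8$. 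The real work --- and where $m\geqslant 7$ is actually consumed --- is placing $\sigma_5,\ldots,\sigma_8$ in $S$: in the degenerate case where $\sigma(i_p)=j_r$ and $\sigma(i_r)=j_p$, single transpositions of $\sigma$ can only reach $S_0$ or $S_2$, the natural choice leaves $\sigma_7,\sigma_8\in S_0$, and the paper must iterate the argument once on the pair $(\sigma_7,\sigma_8)$, choosing $\sigma_5$ (with a pair $i_x,i_y$ such that $\sigma_5(i_x)=j_y$ but $\sigma_5(i_y)\neq j_x$) so that the repetition lands in the good case. None of this two-quadruple cancellation or case analysis appears in your proposal, so as written it does not constitute a proof.
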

\begin{proof}
	Let $\sigma_1,\sigma_2,\sigma_3,\sigma_4$ be as defined in Lemma \ref{lem:identity2}. Let $\sigma_5,\sigma_6,\sigma_7,
	\sigma_8$ be four permutations different from $\sigma_1,\sigma_2,\sigma_3,\sigma_4$ but related to each other just like 
	$\sigma_1,\sigma_2,\sigma_3,\sigma_4$ are. This means that $\sigma_6$ is obtained from $\sigma_5$ by the same transposition 
	that is used to obtain $\sigma_2$ from $\sigma_1$, $\sigma_7$ is obtained from $\sigma_6$ by the same transposition that is
	used to obtain $\sigma_3$ from $\sigma_2$, and $\sigma_8$ is obtained from $\sigma_7$ by the same transposition that is used
	to obtain $\sigma_4$ from $\sigma_3$. So from Lemma \ref{lem:identity2}, we have $P^{[2]}_{\sigma_1}-P^{[2]}_{\sigma_2}=(
	P^{[2]}_{\sigma_3}-P^{[2]}_{\sigma_4})+(P^{[2]}_{\sigma_5}-P^{[2]}_{\sigma_6})-(P^{[2]}_{\sigma_7}-P^{[2]}_{\sigma_8})$. 
	Let $\sigma_1=\sigma$ and $\sigma_2=\sigma'$. If we can find $\sigma_3,\ldots,\sigma_8$ as defined above such that the
	corresponding vertices lie in $S$, then we are done. Since $\sigma,\sigma'\in S_0$, we have some index $a,a\neq i_r,r\in[m]$ 
	such that $\sigma(a)=\sigma'(a)=j_r$. Consider the case when $a=i_p,p\in[m]$ and $\sigma(i_r)\neq j_p$. The case when 
	$a\neq i_p$ for any $p\in[m]$ is similar. We can swap $\sigma'(a)$ with $\sigma'(i_r)$ to get $\sigma_3(a)=\sigma'(i_r),
	\sigma_3(i_r)=j_r$ which clearly lies in $S$. We obtain $\sigma_4$ from $\sigma_3$ by the transposition defined in Lemma 
	\ref{lem:identity2} and clearly $\sigma_4$ also lies in $S$. Next we select a permutation $\sigma_5\in S$ that matches with
	$\sigma_4$ at the four indices defined in Lemma \ref{lem:identity2} and also maps an index $i_{r'},r\neq r'$ to $j_{r'}$.
	So we have $\sigma_3,\sigma_4\in S_1,\sigma_5\in S_2$. Obtaining $\sigma_6,\sigma_7,\sigma_8$ as outlined above, we have
	$\sigma_6\in S_2,\sigma_7,\sigma_8\in S_1$. Next, consider the case when $a=i_p,p\in[m]$ and $\sigma(i_r)=j_p$. This can
	happen when $m$ is even and any transposition of $\sigma$ either results in a permutation in $S_0$ or in $S_2$. It is not
	possible to get a permutation in $S_1$ by a single transposition of $\sigma$. So we obtain $\sigma_3,\sigma_4$ as before
	but this time the vertices lie in $S_2$ instead of $S_1$. Moreover, this time we select a permutation $\sigma_5\in S$ that 
	matches with $\sigma_4$ at the four indices defined in Lemma \ref{lem:identity2} but has a pair of indices $i_x,i_y$ such 
	that $\sigma_5(i_x)=j_y$ but $\sigma_5(i_y)\neq j_x$. Obtaining $\sigma_6,\sigma_7,\sigma_8$ as above, we have $\sigma_5,
	\sigma_6\in S_2,\sigma_7,\sigma_8\in S_0$. We can now repeat the above argument with $\sigma=\sigma_7,\sigma'=\sigma_8$.
	This time however, by the choice of $\sigma_5\in S$, we have ensured that we are in the first case where we could express
	the difference vector as a combination of vertices only in $S$. Note that we need $m$ to be at least $7$ for the above
	argument to work. This is so because the transposition of $\sigma$ that gives $\sigma'$ can use upto four indices so that
	these indices are no longer available to get to $S$. Further, as in the second case above, two other indices swap with
	each other to get to $S_2$. So that takes up a total of six indices that are not available to get the desired $\sigma_5$.
	Now if $m$ is at least $7$, we are guaranteed to find a pair of indices $i_x,i_y$ such that $\sigma_5(i_x)=j_y$ but 
	$\sigma_5(i_y)\neq j_x$.
	\qed
\end{proof}

\begin{theorem}\label{thm:newfacets}
	The following inequality:
	\begin{equation*}
	\sum_{r=1}^{m}Y_{i_rj_r,i_rj_r}-\sum_{r < s}Y_{i_rj_r,i_sj_s}\le 1
	\end{equation*}	
	where $i_1,\dots,i_m$ are all distinct and $j_1,\dots,j_m$ are also distinct, is facet-defining for the QAP-polytope when 
	$m,n\geqslant 7$.
\end{theorem}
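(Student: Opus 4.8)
The plan is to prove that the face $F$ cut out by inequality (\ref{qap4ineq}) has dimension exactly one less than $\dim(\qappolytope{n})$, which is the defining property of a facet. By Lemma \ref{lem:s1s2eqn} the vertices lying on $F$ are precisely those in $S = S_1\cup S_2$, so $F=\conv(S)$; moreover every vertex in $S_0$ strictly satisfies the inequality (its value is $0<1$), so any fixed $P^{[2]}_{\sigma^*}\in S_0$ lies off the supporting hyperplane $H\supseteq \mathrm{aff}(S)$. The strategy is therefore to show that $\mathrm{aff}(S\cup\{P^{[2]}_{\sigma^*}\})=\mathrm{aff}(\qappolytope{n})$: since $P^{[2]}_{\sigma^*}\notin H$, adjoining it to $S$ raises the affine dimension by exactly one, so this equality forces $\dim\mathrm{aff}(S)=\dim\qappolytope{n}-1$. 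I first note that $S_0\neq\emptyset$ for $m,n\geqslant 7$, since avoiding $m\leqslant n$ prescribed position--value pairs is easily satisfiable, so a reference vertex $\sigma^*$ exists.

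First I would dispose of the high layers. The relations supplied by Lemma \ref{lem:identity1} have coefficients $\pm1$ summing to zero (each $\sigma_i'$ is a transpose of $\sigma_i$, hence of opposite sign), so they are genuine affine dependencies and let us rewrite any one vertex as an affine combination of the others. Starting from $k=m$ and descending, Lemma \ref{lem:skasnxt4} expresses each vertex of $S_k$ ($k\geqslant 4$) as an affine combination of vertices in $S_{k-1},\dots,S_{k-4}$; iterating this downward reduces every vertex in $S_4,\dots,S_m$ to an affine combination of vertices in $S_0\cup S_1\cup S_2\cup S_3$. Then Lemma \ref{lem:s3ss0} rewrites each $S_3$-vertex as an affine combination of vertices in $S\cup S_0$. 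Consequently every vertex outside $S$ and $S_0$ already lies in $\mathrm{aff}(S\cup S_0)$.

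The remaining and most delicate step is to collapse the whole of $S_0$ onto the single representative $\sigma^*$. Here I would use Lemma \ref{lem:szeroconn}: the vertices of $S_0$ form a connected graph under the ``differ by one transposition'' relation, so any $P^{[2]}_{\sigma}\in S_0$ is joined to $\sigma^*$ by a path $\sigma^*=\tau_0,\tau_1,\dots,\tau_t=\sigma$ of $S_0$-vertices. Writing $P^{[2]}_{\sigma}-P^{[2]}_{\sigma^*}=\sum_{p=1}^{t}(P^{[2]}_{\tau_p}-P^{[2]}_{\tau_{p-1}})$ and invoking Lemma \ref{lem:szeroins}---which shows, via the cancellation identity of Lemma \ref{lem:identity2} and the counting argument that needs $m\geqslant 7$, that each neighbouring difference lies in the linear span of the differences of $S$-vertices---places $P^{[2]}_{\sigma}-P^{[2]}_{\sigma^*}$ in the direction space of $\mathrm{aff}(S)$. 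Hence $P^{[2]}_{\sigma}\in\mathrm{aff}(S\cup\{P^{[2]}_{\sigma^*}\})$ for every $\sigma\in S_0$. Combined with the previous paragraph, every vertex of $\qappolytope{n}$ lies in $\mathrm{aff}(S\cup\{P^{[2]}_{\sigma^*}\})$, completing the dimension count and hence the proof.

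I expect the main obstacle to be exactly this last collapse of $S_0$: unlike the layers $S_3,\dots,S_m$, the $S_0$-vertices cannot all be eliminated (one must survive to supply the extra dimension), so instead of an affine dependency one must argue at the level of difference vectors and show that they stay within the direction space spanned by $S$. This is where the connectivity of $S_0$ and the $n$-independent cancellation of Lemma \ref{lem:identity2} (and the hypothesis $m\geqslant 7$) are essential; the descent through the layers $S_k$ and the final dimension bookkeeping are then routine.
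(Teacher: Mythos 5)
Your proposal is correct and follows essentially the same route as the paper's own proof: reduce the layers $S_m,\dots,S_4$ via Lemma \ref{lem:skasnxt4}, handle $S_3$ via Lemma \ref{lem:s3ss0}, and then collapse $S_0$ onto a single representative $\sigma^*$ by combining the connectivity of Lemma \ref{lem:szeroconn} with the telescoping of neighbouring differences supplied by Lemma \ref{lem:szeroins}. Your only departures are cosmetic refinements---phrasing the dependencies from Lemma \ref{lem:identity1} as affine (coefficients summing to zero) rather than linear combinations, and explicitly noting $S_0\neq\emptyset$---which tighten the dimension bookkeeping but do not change the argument.
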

\begin{proof}
	From Lemma \ref{lem:skasnxt4} and Lemma \ref{lem:s3ss0} we can conclude that any vertex in $V\setminus(S\cup S_0)$ can be 
	expressed as a linear combination of the vertices in $S\cup S_0$. What remains to be shown is that any vertex in $S_0$
	can be expressed as a linear combination of the vertices in $S$ and a fixed vertex $\sigma^*\in S_0$. From Lemma 
	\ref{lem:szeroconn} we know that it is possible to go from any vertex in $S_0$ to any other vertex in $S_0$ via 
	transpositions such that all the intermediate vertices are in $S_0$. Let us fix some arbitrary vertex in $S_0$ as $\sigma^*$.
	So there is a path from every other vertex in $S_0$ to $\sigma^*$. Consider a vertex $\sigma\in S_0$. Let $\sigma,\sigma_1,
	\ldots,\sigma_t,\sigma^*$ be a path from $\sigma$ to $\sigma^*$. From Lemma \ref{lem:szeroins} we can express the difference
	of any vertex $\sigma\in S_0$ with any other vertex $\sigma'\in S_0$ such that $\sigma'$ is a transposition of $\sigma$, as
	a linear combination of the vertices in $S$. So we have $\sigma-\sigma_1\in span(S),\sigma_1-\sigma_2\in span(S),\ldots,
	\sigma_t-\sigma^*\in span(S)$ which implies that $\sigma-\sigma^*\in span(S)$ or $\sigma\in span(S\cup\{\sigma^*\})$.
	\qed
\end{proof}

\section{Membership testing}\label{sec:sepncomp}
In this section we consider the membership testing problem for each of the QAP relaxations defined in Section \ref{sec:relaxations}. That is, for each of these relaxations we wish to test whether a given point $x$ satisfies all the constraints. Note that the separation problem where one wishes to identify a violated inequality in case the answer to membership testing is negative is a harder problem. Typically for efficient use in cutting plane methods one would like to solve the (harder) separation problem.

We show that membership testing for inequalities QAP\ref{qap1}, QAP\ref{qap2}, or QAP\ref{qap4} is coNP-complete.

Recall that QAP\ref{qap1} is defined by the following set of inequalities:
\[\begin{tabular}{rcl}
$\displaystyle\sum_{r=1}^{m}Y_{i_rj_r,kl} - Y_{kl,kl}- \sum_{r < s}Y_{i_rj_r,i_sj_s}$ & $\leqslant$ & $0$
\end{tabular}\]
where $i_1,\dots,i_m,k$ are all distinct and $j_1,\dots,j_m,l$ are also
distinct. In addition, $n\geqslant 6, m\geqslant 3$.

\begin{theorem}\label{thm:sepqap1}
	Given a point $x\in \BR^{n^4}$ with $0\leqslant x \leqslant 1$, it is coNP-complete to decide whether $x$ satisfies all inequalities of QAP\ref{qap1}.
\end{theorem}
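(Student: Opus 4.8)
The plan is to prove the two halves of coNP-completeness separately: membership testing for QAP\ref{qap1} is in coNP, and it is coNP-hard.

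For containment in coNP, I would argue that a point $x$ that fails to satisfy QAP\ref{qap1} has a short certificate. A violation is witnessed by an admissible tuple $(m;i_1,\dots,i_m,k;j_1,\dots,j_m,l)$ making the left-hand side of (\ref{qap1ineq}) positive. Such a tuple has size $O(n\log n)$, and the corresponding left-hand side has only $O(m^2)$ terms, so membership of the witness and the sign of the left-hand side are checkable in polynomial time. Hence the set of points satisfying \emph{all} inequalities of QAP\ref{qap1} lies in coNP.

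For hardness I would reduce from \textsc{Independent Set}: given a graph $H$ on $[N]$ and a threshold $t\ge 3$, decide whether $\alpha(H)\ge t$ (instances with $t\le 2$ or tiny $N$ are decided directly, and $H$ may be padded so that $N\ge 5$, giving $n\ge 6$). The goal is a polynomial-time map $H\mapsto x\in[0,1]^{n^4}$ with $n=N+1$ such that $x$ violates some QAP\ref{qap1} inequality iff $\alpha(H)\ge t$; equivalently $x$ satisfies all of them iff $\alpha(H)\le t-1$, which is the coNP-complete complement of \textsc{Independent Set}. Encode vertex $v$ by the diagonal cell $(v,v)$ and reserve $\star=N+1$ to serve as $k=l$. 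The construction is: set the pairwise entry $x_{(v,v),(w,w)}=\tfrac1t$ when $vw\in E(H)$ and $0$ otherwise; set the reward entries $x_{(v,v),(\star,\star)}=\tfrac1t$; set the designated threshold $x_{(\star,\star),(\star,\star)}=1-\tfrac1{2t}$; set $x_{(v,v),(v,v)}=1$; and set every remaining entry to $0$. All entries lie in $[0,1]$. For the designated choice $k=l=\star$, any set $T$ of diagonal cells is automatically an admissible partial permutation (distinct rows, distinct columns), and the left-hand side becomes $\tfrac1t\big(|T|-e_H(T)\big)-\big(1-\tfrac1{2t}\big)$, where $e_H(T)$ counts edges of $H$ inside $T$. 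The key combinatorial fact is $\max_{T}\big(|T|-e_H(T)\big)=\alpha(H)$: the lower bound is a maximum independent set, and the upper bound holds because the minimum vertex cover of $H[T]$ has size at most $e_H(T)$, so $|T|-e_H(T)\le\alpha(H[T])\le\alpha(H)$ by Gallai's identity. Thus the best designated left-hand side equals $\tfrac{\alpha(H)}t-1+\tfrac1{2t}$, which is positive iff $\alpha(H)\ge t$; moreover the witnessing $T$ has size $\alpha(H)\ge t\ge 3$, meeting the constraint $m\ge 3$.

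The main obstacle, which is exactly what the careful numerology is designed to handle, is ruling out \emph{spurious} violations arising from the many other choices of $(k,l)$, all while keeping every entry in $[0,1]$. I would finish with a short case analysis on the type of $(k,l)$: for $(k,l)=(\star,\star)$ the displayed bound gives a negative left-hand side whenever $\alpha(H)\le t-1$; for $(k,l)=(w,w)$ with $w\in[N]$ the rewards are $\tfrac1t$ only on neighbours of $w$, while the diagonal term $x_{(w,w),(w,w)}=1$ dominates, since $\max_{S\subseteq N(w)}\tfrac1t(|S|-e_H(S))=\tfrac1t\alpha(H[N(w)])\le\tfrac{t-1}t<1$; and for every off-diagonal $(k,l)$ all reward and diagonal entries vanish, so the left-hand side is a nonpositive sum of penalties. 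One also checks that admitting the cell $(\star,\star)$ into $T$ in these cases only lowers the left-hand side, as it contributes reward at most $\tfrac1t$ but penalty $\tfrac1t$ against every diagonal cell of $T$. The $[0,1]$ bound is precisely what forces the $\tfrac1t$-scaling and the threshold $1-\tfrac1{2t}$: the natural integer threshold $t-1$ is not representable, so the independent-set computation must be rescaled into the unit interval. As the construction is polynomial, this gives coNP-hardness, and together with containment, coNP-completeness.
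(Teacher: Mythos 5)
Your proof is correct, and although it shares the same skeleton as the paper's --- a designated cell plays the role of $(k,l)$, its diagonal entry acts as a threshold, the cross terms with it are rewards, the pairwise terms encode the input graph, and all remaining diagonal entries are made large enough that no other choice of $(k',l')$ can produce a spurious violation --- the execution is genuinely different. The paper reduces from max-clique via an auxiliary $n$-partite graph on all cells $(ij)$, places the \emph{prohibitive} value $n$ on pairwise entries corresponding to non-edges (so any violating tuple must be a clique outright), uses entries as large as $n^2$, and meets the hypothesis $0\leqslant x\leqslant 1$ only by observing that the inequalities of QAP\ref{qap1} are homogeneous (right-hand side $0$) and scaling the point; its hardness argument is then phrased as an oracle (Turing-style) reduction, querying membership for all choices of $k$, $l$ and $t$. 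You instead give a single many-one reduction from the complement of Independent Set, use only diagonal cells plus one extra index $\star$, and keep all entries natively in $[0,1]$. The price of respecting the box constraint directly is that your penalties $\tfrac1t$ are \emph{soft} --- violating tuples need not be independent sets --- so you need the additional combinatorial identity $\max_{T}\bigl(|T|-e_H(T)\bigr)=\alpha(H)$, proved via vertex covers and Gallai's identity; this key lemma has no counterpart in the paper, where the hard penalties make the structure of violating tuples immediate. What each approach buys: the paper's construction has an essentially trivial analysis of violations but needs the scaling remark and the $n$-partite detour; yours yields a cleaner Karp reduction and a point that satisfies the stated hypothesis as built, at the cost of one extra (elementary) lemma and a slightly more delicate case analysis for the non-designated choices of $(k,l)$.
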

\begin{proof}
	The problem is clearly in coNP since given a violated inequality it can be checked quickly that it is indeed violated. 
	
	For establishing NP-hardness we will reduce the max-clique problem to membership testing for QAP\ref{qap1}. 
	Let the given instance of the max-clique problem be $G=(V,E)$ where $V=[n]$. We construct a $n$-partite graph $G'=(V',E')$ 
	where $V'=\{(ij)\}$ for $i,j\in [n]$ and $\{i_1j_1,i_2j_2\}\in E'$ if and only if $\{i_1,i_2\}\in E$. So if there is an
	edge $\{i,j\}\in E$ then we get a complete bi-partite graph between the partitions $i$ and $j$, else there is no edge between
	these two partitions. Consider a clique $C=\{i_1,i_2,\ldots,i_k\}$ of size $k$ in $G$. Then the set of vertices $\{i_1j_1,
	i_2j_2,\ldots,i_kj_k\}$ where $j_r$ could be any arbitrary index in $[n]$, forms a clique of size $k$ in $G'$. Conversely, 
	given a clique $C=\{i_1j_1,i_2j_2,\ldots,i_kj_k\}$ of size $k$ in $G'$, the set of vertices $\{i_1,i_2,\ldots,i_k\}$ forms a 
	clique of size $k$ in $G$. 
	Fix a pair of indices $k,l\in[n]$ arbitrarily and add edges $\{kl,i_rj_r\}$ for all $i_r\in\{[n]\setminus\{k\}\}$
	to $G'$ (if the edge is not already present). Now construct a point $Y$ as follows:
	\[
	Y_{i_1j_1,i_2j_2}= 
	\begin{cases}
	0,& \text{if } (kl)\notin\{(i_1j_1),(i_2j_2)\}\text{ and }\{i_1j_1,i_2j_2\}\in E'\\
	n,& \text{if } (kl)\notin\{(i_1j_1),(i_2j_2)\}\text{ and }\{i_1j_1,i_2j_2\}\notin E'\\
	1,& \text{if } (i_1\neq i_2=k\text{ and } j_1\neq j_2=l)\\
	& \text{ or }(k=i_1\neq i_2\text{ and }l=j_1\neq j_2)\\
	t,& \text{if } i_1=i_2=k\text{ and } j_1=j_2=l\\
	n^2,& \text{if } i_1=i_2\text{ and } j_1=j_2\text{ and } (i_1\neq k\text{ or } j_1\neq l)
	\end{cases}
	\]
	where $t\geqslant 2$ is a natural number. Notice that any point $Y$ satisfies all the inequalities of QAP\ref{qap1} if and 
	only if $\alpha Y$ satisfies them for all $\alpha\geqslant 0$. Therefore $Y$ can be scaled to satisfy $0\leqslant Y\leqslant 1$. We will ignore this scale factor and continue our argument with $Y$ as constructed above to avoid cluttered equations.
	
	We claim that $Y$ satisfies all the inequalities of QAP\ref{qap1} if and only if every clique in the subgraph induced
	by the neighborhood of the vertex $(kl)$ in $G'$, has size at most $t$.
	
	Suppose that the largest clique $C=\{i_1j_1,i_2j_2,\ldots,i_{t'}j_{t'}\}$ such that $\{i_rj_r,kl\ab\}\ab\in E'$ for $r\in[t']$, has 
	size $t'>t$. Without loss of generality, we can assume  that $i_1,\ldots,i_{t'}$ as well as $j_1,\ldots,j_{t'}$ are distinct. Consider the inequality $h_C$ corresponding to the choice of indices  $\{i_1,\ldots,i_{t'},k\}, \{j_1,\ldots,j_{t'},l\}.$ From the above construction, $\sum_{r=1}^{t'}Y_{i_rj_r,kl}=t',$ $Y_{kl,kl}=t$ and $\sum_{r < s,r,s\in[t']}
	Y_{i_rj_r,i_sj_s}=0$ giving $t'\leqslant t$ and $Y$ violates $h_C$.
	
	Now suppose that every clique $C$ in the subgraph induced by the neighborhood of $(kl)$, has size at most $t$ and there exists
	an inequality of QAP\ref{qap1} defined by the sets $\{i_1,\ldots,i_m,k'\}, \{j_1,\ldots,j_m,l'\}$ that is violated by the above point $Y$. 
	Notice that $k'=k$ and  $l'=l$ must hold. If not then $\sum_{r=1}^{m}Y_{i_rj_r,k'l'}\ab -\sum_{{r < s,r}}Y_{i_rj_r,i_sj_s}\leqslant nm \leqslant n^2 = Y_{k'l',k'l'}$ and $Y$ does not violate the inequality.
	So any violated inequality must have $\sum_{r=1}^{m}Y_{i_rj_r,kl}>t+\sum_{r < s}Y_{i_rj_r,i_sj_s}$. Since $Y_{i_rj_r,kl}=1$ 
	for all $i_r\in\{[n]\setminus\{k\}\}$, we have 
	$m>t+\sum_{r < s}Y_{i_rj_r,i_sj_s}$ for any violated inequality, which is not possible if $Y_{i_rj_r,i_sj_s}=n$ since $m\leqslant n$ and $t\geqslant 2$. Therefore, $Y_{i_rj_r,i_sj_s}=0$ for all distinct $r,s\in[m]$  and so $\{i_rj_r,,i_sj_s\}\in E'$. But then, $m>t$ giving a clique of size larger than $t$ in the neighborhood of $(kl)$ contradicting the assumption that the every such clique has size at most $t$.
	
	Therefore, given a membership oracle for QAP\ref{qap1} we can compute the size of the largest clique in any graph except $K_n$ by calling such an oracle for various choices of $k,l$ and $t$ and outputting the largest value of $t$ for which the above constructed point satisfies all the inequalities.
	
	%
	\qed
\end{proof}

\medskip

Next, recall that QAP\ref{qap2} is defined by the following set of inequalities:
\[\begin{tabular}{rcl}
$\displaystyle(\beta-1)\sum_{(ij)\in P\times Q}Y_{ij,ij}-\sum_{\substack{(ij),(kl)\in P\times Q\\i<k}}Y_{ij,kl}$ & $\leqslant$ & $\frac{\beta^2-\beta}{2}$
\end{tabular}\]
where $P,Q\subset[n]$. In addition (i) $\beta +1 \leqslant\lvert P\rvert,\lvert Q\rvert\leqslant n-3$, (ii) $\lvert P\rvert+\lvert Q\rvert\leqslant 
n-3+\beta$, (iii) $\beta\geqslant2$.

\begin{theorem}\label{thm:sepqap2}
	Given a point $x\in \BR^{n^4}$ with $x\geqslant 0$, it is coNP-complete to decide whether $x$ satisfies all inequalities of QAP\ref{qap2}. 
\end{theorem}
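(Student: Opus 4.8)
The plan is to follow the template of the proof of Theorem~\ref{thm:sepqap1}. Membership is clearly in coNP, since a violated QAP\ref{qap2} inequality is specified by the polynomial-size data $(P,Q,\beta)$ and checking it is violated amounts to evaluating two sums of at most $n^4$ terms. For coNP-hardness I would again reduce from \textsc{Clique}. Given a graph $G=([N_0],E)$ and a target $k\ge 3$, I would build a nonnegative point $Y$ for \qappolytope{n} with $n=2N_0+1$ such that $Y$ satisfies every QAP\ref{qap2} inequality if and only if $G$ has no clique of size $\ge k$; since \textsc{Clique} is NP-complete this makes membership testing coNP-hard, and together with membership in coNP it is coNP-complete. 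Note that, unlike in Theorem~\ref{thm:sepqap1}, the statement only requires $x\ge 0$, so no rescaling into $[0,1]$ is needed.

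First I would fix a column set $Q_0\subseteq[n]$ with $|Q_0|=k$ and define the entries of $Y$ by their combinatorial role: the diagonal entry $Y_{ij,ij}$ is a small reward $y:=\frac{1}{k(k-1)}$ when $i\in[N_0]$ and $j\in Q_0$, and is $0$ otherwise; the cross entry $Y_{ij,kl}$ (with $i\ne k$) is $0$ when $i,k\in[N_0]$, $\{i,k\}\in E$ and $j,l\in Q_0$, it is a large penalty $W$ (a sufficiently large polynomial in $n$) when $i,k\in[N_0]$ but either $\{i,k\}\notin E$ or one of $j,l$ lies outside $Q_0$, and it is $0$ whenever $i\notin[N_0]$ or $k\notin[N_0]$. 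Evaluating the left-hand side of QAP\ref{qap2} on $Y$ for a choice $(P,Q,\beta)$, the penalty $W$ is so large that the inequality is trivially satisfied unless $Q\subseteq Q_0$ and the graph vertices $P\cap[N_0]$ form a clique; in that ``clean'' regime the cross sum vanishes and, writing $p=|P\cap[N_0]|$ and $q'=|Q|$, the left-hand side equals $(\beta-1)\,y\,p\,q'$ while the right-hand side is $\binom{\beta}{2}$, so the violation margin factors as $(\beta-1)\bigl(y p q'-\tfrac{\beta}{2}\bigr)$.

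The calibration $y=\frac{1}{k(k-1)}$ is what makes the two directions work. If $G$ has a clique of size $\ge k$, I would take $P$ to be such a clique of size $k$, $Q=Q_0$ and $\beta=2$; then $y p q'=\frac{k^2}{k(k-1)}=\frac{k}{k-1}>1=\tfrac{\beta}{2}$, so the margin is $\frac{1}{k-1}>0$ and the inequality is violated (one checks that constraints (i)--(iii), in particular $\beta+1\le|P|,|Q|\le n-3$ and $|P|+|Q|\le n-3+\beta$, all hold for $n=2N_0+1$). Conversely, if $G$ has no clique of size $\ge k$, then in the clean regime $p\le k-1$ and $q'\le k$, so $y p q'\le y(k-1)k=1\le\tfrac{\beta}{2}$ for every $\beta\ge 2$, forcing the margin $\le 0$; the corner cases $|P\cap[N_0]|\le 1$ and $P\cap[N_0]$ not a clique are handled by the same bound and by the penalty $W$ respectively. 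Hence no valid inequality is violated.

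The step I expect to be the main obstacle---and the genuinely new difficulty compared with Theorem~\ref{thm:sepqap1}---is controlling the free parameter $\beta$, which has no analogue in QAP\ref{qap1}. Because the adversary may shrink $Q$ and lower $\beta$, a naive reward/penalty gadget would let a sub-threshold clique trigger a spurious violation by rescaling the effective threshold. The factorization of the margin as $(\beta-1)\bigl(y p q'-\tfrac{\beta}{2}\bigr)$ together with the exact choice $y=\frac{1}{k(k-1)}$ is precisely what neutralizes this: it pushes the worst no-instance choice (namely $p=k-1$, $q'=k$, $\beta=2$) to margin exactly $0$ while keeping every yes-instance strictly positive. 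Verifying that this bound is uniform over all admissible $(P,Q,\beta)$, and that the polynomial-size penalty $W$ indeed dominates the diagonal reward in every configuration containing a penalty term, is the part that must be carried out carefully.
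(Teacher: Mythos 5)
Your proposal is correct, and at the top level it follows the same strategy as the paper's proof: reduce from clique by building a nonnegative point whose off-diagonal entries encode the graph (zero on edges, a large penalty on non-edges) and whose diagonal carries a calibrated reward, so that a violated QAP\ref{qap2} inequality with $\beta=2$ exists precisely when the graph has a large clique. The differences are in the gadget, and they are substantive. The paper places the reward $1/t$ on a single column and exhibits the violation with $Q=\{1\}$, $\beta=2$; but $|Q|=1$ contradicts the stated side condition (i) $\beta+1\leqslant|Q|\leqslant n-3$, so the violated inequality the paper produces is, as written, not a member of the class QAP\ref{qap2} (this is patchable, e.g.\ by taking $Q=\{1,2,3\}$, since the columns $2,3$ contribute nothing to the paper's diagonal sum, but the paper does not do this). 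Your gadget instead spreads the reward over a block $[N_0]\times Q_0$ with $|Q_0|=k\geqslant 3$ and recalibrates it to $1/(k(k-1))$; the violating triple $(P,Q_0,\beta=2)$ then satisfies all of (i)--(iii), so your reduction is sound against the class exactly as defined. Two further differences: your padding $n=2N_0+1$ makes every target $k\leqslant N_0$ compatible with the cardinality conditions, whereas the paper must cap $t\leqslant n-4$ and handle cliques of size $n-3$ or larger by brute force over $O(n^3)$ subsets; and yours is a one-shot many-one reduction from the clique decision problem, while the paper phrases its reduction as computing the clique number by iterating the threshold $t$. One detail to make explicit when you write it up: penalizing cross entries whose column lies outside $Q_0$ (not just non-edge pairs) is what forces $Q\subseteq Q_0$ in any penalty-free configuration with $|P\cap[N_0]|\geqslant 2$, and restricting the diagonal reward to columns of $Q_0$ is what kills the corner case $|P\cap[N_0]|\leqslant 1$ even when $Q\not\subseteq Q_0$ (there the reward is at most $1/(k-1)<\beta/2$); both features are present in your construction and both are genuinely needed.
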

\begin{proof}
	Again we will reduce the max-clique problem to membership testing for QAP\ref{qap2}. Given an instance of the max-clique 
	problem, $G=(V,E)$ with $|V|=n$, we construct a point $Y$ as follows:
	\[
	Y_{i_1j_1,i_2j_2}=
	\begin{cases}
	0,& \text{if } \{i_1,i_2\}\in E\\
	n^2,& \text{if } \{i_1,i_2\}\notin E\text{ } (i_1\neq i_2)\\
	1/t,& \text{if } i_1=i_2\text{ and } j_1=j_2=1\\
	0,& \text{if } i_1=i_2\text{ and } j_1=j_2\text{ and } j_1>1
	\end{cases}
	\]
	where $1\leqslant t\leqslant n-4$ is some natural number. We claim that $Y$ satisfies all inequalities of QAP\ref{qap2} if and only if $G$ doesn't contain a clique of size larger than $t$. This gives an algorithm to find the size of largest clique in $G$ by increasing $t$ gradually and computing the smallest value of $t$ for which $Y$ becomes feasible. If $Y$ remains infeasible for $t=n-4$ then the largest clique in $G$ has size $n-3$ or more. This can be determined by checking the $O(n^3)$ possible subsets of vertices of $G$.
	
	To prove the claim, let $P$ be a clique in $G$ with $t<|P|\leqslant n-3$. Define $Q=\{1\}, \beta=2$ and consider the inequality $h$ defined by $P,Q,\beta$. It can be checked that $Y$ violates $h$.
	
	\medskip
	Conversely, let $h$ be an inequality defined by $P,Q,\beta$ that is violated by $Y$. 
	
	We first observe that $1\in Q$. Suppose not, then for any $(i,j)\in P\times Q$ we have $Y_{ij,ij}=0$. Since $\beta^2-\beta \geqslant 0$ for all natural $\beta\geqslant 2$, $h$ cannot be violated by $Y$.
	It follows that if $i,k\in P$ and $i\neq k$, then $\{i,k\}\in E.$
	Again, suppose not. Then, $Y_{ij,kl}=n^2$ for all $j,l$ but $(\beta-1)\sum_{(ij)\in P\times Q}Y_{ij,ij}\leqslant(n-4)\cdot(n-3) < n^2$. So $Y$ cannot violate $h$ as $(\beta-1)\sum_{(ij)\in P\times Q}Y_{ij,ij}-\sum_{(ij),(kl)\in P\times Q,i<k}Y_{ij,kl}$ is negative but $\beta^2-\beta$ is nonnegative.
	
	So we have that if $Y$ violates $h$ then $Y_{ij,kl}=0$ for all $i\neq k$. Further, such an inequality must have $\{i,k\}\in E$ for all distinct $i,k\in P$. That is $P$ must form a clique in $G$.
	Recall that only $1\in Q$ contributes a non-zero value to the left hand side expression of $h$. Therefore, if $Y$ violates $h$ then $|P|/t=\sum_{(ij)\in P\times Q}Y_{ij,ij}>\beta/2$ and $G$ contains a clique of size larger than $t\beta/2$, that is, larger than $t.$     
	\qed
\end{proof}

\medskip

Finally, recall that QAP\ref{qap4} is defined by the following set of inequalities:
\[\begin{tabular}{rcl}
$\displaystyle\sum_{r=1}^{m}Y_{i_rj_r,i_rj_r}-\sum_{r < s}Y_{i_rj_r,i_sj_s}$ & $\leqslant$ & $1$
\end{tabular}\]
where $i_1,\dots,i_m$ are all distinct and $j_1,\dots,j_m$ are also
distinct. In addition, $m,n\geqslant 7$.

\begin{theorem}\label{thm:sepqap4}
	Given a point $x\in \BR^{n^4}$ with $x\geqslant 0$, it is coNP-complete to decide whether $x$ satisfies all inequalities of QAP\ref{qap4}.
\end{theorem}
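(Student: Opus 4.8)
The plan is to follow exactly the template of Theorems \ref{thm:sepqap1} and \ref{thm:sepqap2}, reducing from max-clique. Containment in coNP is immediate: a violated QAP\ref{qap4} inequality is specified by its at most $n$ cells $(i_1,j_1),\dots,(i_m,j_m)$, a polynomial-size certificate, and checking that $\sum_{r}x_{i_rj_r,i_rj_r}-\sum_{r<s}x_{i_rj_r,i_sj_s}>1$ takes polynomial time; hence the complementary problem is in NP and membership is in coNP. The substance of the proof is the hardness reduction, so that is where I would spend the effort.

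Given $G=(V,E)$ with $V=[n]$ and a natural number parameter $t$, I would build a point $Y=Y(t)\geqslant 0$ as follows: set the diagonal entry $Y_{ij,ij}=1/t$ for every cell $(i,j)$; set $Y_{i_1j_1,i_2j_2}=0$ whenever $i_1\neq i_2$ and $\{i_1,i_2\}\in E$; set $Y_{i_1j_1,i_2j_2}=M$ for a large value $M$ (e.g.\ $M=n$) whenever $i_1\neq i_2$ and $\{i_1,i_2\}\notin E$; and set the remaining entries (those with $i_1=i_2,\ j_1\neq j_2$, which never appear in a QAP\ref{qap4} inequality) to $0$. The crucial observation is that in any QAP\ref{qap4} inequality the $m$ diagonal terms each equal $1/t$ while each off-diagonal term depends only on the row pair $\{i_r,i_s\}$ and equals $0$ on edges and $M$ on non-edges. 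Thus the left-hand side evaluates to $m/t$ minus $M$ times the number of non-edge pairs among the rows $i_1,\dots,i_m$.

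From this evaluation the claim follows cleanly. If $\{i_1,\dots,i_m\}$ is a clique the left-hand side is exactly $m/t$, which exceeds $1$ iff $m>t$; and if even one pair $\{i_r,i_s\}$ is a non-edge the left-hand side is at most $m/t-M\leqslant n/t-M\leqslant 0<1$ as soon as $M\geqslant n$ (using $m\leqslant n$, $t\geqslant 1$), so no inequality with a non-edge among its rows can ever be violated. Consequently, for $t\geqslant 6$, the point $Y(t)$ violates some QAP\ref{qap4} inequality iff $G$ has a clique of size $m$ with $m>t$ and $m\geqslant 7$; since $m>t\geqslant 6$ already forces $m\geqslant 7$, this is simply the existence of a clique of size larger than $t$. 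Equivalently, $Y(t)$ satisfies all QAP\ref{qap4} inequalities iff $\omega(G)\leqslant t$. (In the forward direction the required distinct column indices $j_1,\dots,j_m$ always exist because $m\leqslant n$.)

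The one delicate point — and the main obstacle — is the side condition $m\geqslant 7$ in the definition of QAP\ref{qap4}, which makes these inequalities blind to cliques of size at most $6$. I would dispose of it exactly as the QAP\ref{qap2} proof disposes of its boundary range: first query $Y(6)$; if it is feasible then $\omega(G)\leqslant 6$ and the exact clique number is obtained by brute force over the $O(n^6)$ vertex subsets of size at most $6$, whereas if $Y(6)$ is infeasible then $\omega(G)\geqslant 7$ and $\omega(G)$ equals the smallest $t\in\{7,\dots,n\}$ for which $Y(t)$ is feasible, found by querying the membership oracle. This computes $\omega(G)$ in polynomial time using a membership oracle for QAP\ref{qap4}, establishing coNP-hardness; together with the coNP containment noted at the start, this gives coNP-completeness. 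The only remaining obligations are the routine checks that $Y(t)\geqslant 0$ and that the above evaluation of the left-hand side is exact, both of which are straightforward.
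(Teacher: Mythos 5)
Your proposal is correct and follows essentially the same route as the paper's own proof: the identical max-clique reduction with diagonal entries $1/t$, zero entries on edge pairs, a large penalty on non-edge pairs (you use $M=n$ where the paper uses $n/6$), and the same use of $t\geqslant 6$ to make the side condition $m\geqslant 7$ automatic, followed by the same oracle-querying scheme with brute force for clique number at most $6$. The only differences are cosmetic choices of constants, so there is nothing substantive to distinguish the two arguments.
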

\begin{proof}
	Given an instance of the max-clique problem, $G=(V,E)$ with $|V|=n$, we construct a point $Y$ as follows:
	\[
	Y_{i_1j_1,i_2j_2}= 
	\begin{cases}
	0,& \text{if } \{i_1,i_2\}\in E\\
	n/6,& \text{if } i_1\neq i_2\text{ and }\{i_1,i_2\}\notin E\\
	1/t,& \text{if } i_1=i_2\text{ and } j_1=j_2\\
	0,& \text{otherwise}
	\end{cases}
	\]
	where, $t\geqslant 6$ is a natural number.
	
	We claim that $Y$ is infeasible if and only if there exists a clique in $G$ of size at least $t+1$.
	
	Suppose there exists a clique $C=\{p_1,\ldots,p_m\}$ in $G$ with $m\geqslant t+1$. Consider the inequality $h_C$ defined by the indices $i_1,\ldots,i_m$ and $j_1,\ldots,j_m$ with $i_k=j_k=p_k$ for all $k\in[m]$. Then, $Y$ violates $h_C$ because $m/t>1$.
	
	Conversely, suppose every clique in $G$ has size at most $t$ and let $h$ be a violated inequality defined by indices $i_1,\ldots,i_m$ and $j_1,\ldots,j_m$. It must hold that $\{i_r,i_s\}\in E$ otherwise the left hand side in the inequality $h$ with respect to $Y$ is at most $m/t-n/6$ which is at most zero since $t\geqslant 6$ and $m\leqslant n$ and so $Y$ cannot violate $h$. So for a violation, $G$ must contain a clique of size $m$. But then $Y_{i_rj_r,i_sj_s}=0$ for all distinct $r,s\in[m]$ and so $m/t>1$ contradicting the assumption that $G$ contains no cliques of size larger than $t$.
	
	Therefore, given a graph $G=(V,E)$ if $Y$ is feasible for all values of $t\geqslant 6$ then the size of a largest clique is at most $6$ and can be computed in polynomial time. Otherwise, the largest value of $t$ for which $Y$ is infeasible equals the size of the largest clique in $G$ minus one.
	\qed
\end{proof}

\section{Extension Complexity}\label{sec:extncomp}
In this section we will prove that any relaxation of the QAP-polytope for which any of the families of inequalities defined in Section \ref{sec:relaxations} are valid, has superpolynomial extension complexity. A set of linear inequalities is said to be a relaxation of the QAP-polytope if it contains the QAP-polytope. 


We will use Theorem \ref{thm:extformcc} to prove each of the results in this section. The general argument would be as follows. First we will show that certain specific matrices require exchange of many bits in any EF-protocol for them. Then, supposing that $\qaprelax{}{n}$ is any  relaxation of \qappolytope{n} for which inequalities $\cH$ are valid, we will prove that any EF-protocol for the slack matrix of $\qaprelax{}{n}$ requires an exchange of many bits by showing that computing the slack entries corresponding to the valid inequalities in $\cH$ and the vertices of \qappolytope{n} allows us to compute our special matrices by exchanging few extra bits. So the slack matrix of $\qaprelax{}{n}$ contains a submatrix that is difficult to compute and thus the whole slack matrix is difficult to compute.

\subsection{Hard matrices for EF-protocols}
Let $n$ be natural numbers. Consider the $2^n\times 2^n$ sized matrices $M^1_n, N^0_n, N^1_n$ defined as follows. The rows and columns of these matrices are indexed by $0/1$ vectors of length $n$. Let $a, b \in \{0,1\}^n$. The entry with row index $a$ and column index $b$ for each of these matrices is defined as follows:
\[\begin{tabular}{rcl}
	$M^k_n(a,b)$ & $:=$ & $(a^\intercal b-k)^2$ \\
	$N^k_n(a,b)$ & $:=$ & $(a^\intercal b-k)\cdot(a^\intercal b-k-1)$
\end{tabular}\]

The matrices $M^1_n$ require $\Omega(n)$ bits to be exchanged in any EF-protocol and were used to prove that the correlation polytope has superpolynomial extension complexity by Fiorini et al. \cite{jacm/FioriniMPTW15}. Later Kaibel and Weltge \cite{dcg/KaibelW15} gave a simple combinatorial proof of the hardness of computing the same matrices $M^1_n$.  In particular, the following holds:

\begin{lemma}\cite{jacm/FioriniMPTW15,dcg/KaibelW15}\label{lem:disjointnesshard}
Any EF-protocol for computing $M^1_n$ requies an exchange of $\Omega(n)$ bits.
\end{lemma}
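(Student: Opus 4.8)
The plan is to pass from EF-protocols to nonnegative factorizations and then lower-bound the nonnegative rank of $M^1_n$ by a rectangle-covering argument that ultimately rests on the hardness of unique disjointness.

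First I would record the (standard) reduction underlying Theorem~\ref{thm:extformcc}: if Alice and Bob have an EF-protocol for $M^1_n$ of complexity $c$, then there are at most $2^c$ distinct transcripts, and conditioning on the transcript writes $M^1_n$ as a sum of at most $2^c$ nonnegative rank-one matrices (the product of Alice's leaf-distribution, Bob's leaf-distribution, and the expected output on that leaf). Hence $\mathrm{rank}_+(M^1_n)\le 2^c$, where $\mathrm{rank}_+$ denotes the nonnegative rank, and it suffices to prove $\mathrm{rank}_+(M^1_n)=2^{\Omega(n)}$.

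Next I would set up the rectangle structure. Writing $M^1_n=\sum_{t=1}^r u_tv_t^\intercal$ with $u_t,v_t\ge 0$, each term is supported on a combinatorial rectangle $A_t\times B_t$ (where $A_t=\mathrm{supp}(u_t)$ and $B_t=\mathrm{supp}(v_t)$). Because $M^1_n(a,b)=0$ exactly when $a^\intercal b=1$, and a single zero entry of a sum of nonnegative matrices forces every summand to vanish there, each rectangle $A_t\times B_t$ must avoid all pairs with $a^\intercal b=1$, while together the rectangles cover every pair with $a^\intercal b\neq 1$. Lower-bounding $r$ therefore reduces to showing that exponentially many ``intersection-$\neq 1$-free'' rectangles are needed to cover the disjoint pairs $\{(a,b):a^\intercal b=0\}$.

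The step I expect to be the main obstacle is precisely this covering bound, and the subtlety is that the mere support pattern is too weak: forbidding only $a^\intercal b=1$ (rather than all $a^\intercal b\ge 1$) lets a rectangle contain both disjoint pairs and high-intersection pairs, so a naive rectangle-cover or fooling-set count does not reach $2^{\Omega(n)}$. To overcome this I would restrict to the unique-disjointness promise $a^\intercal b\in\{0,1\}$ and invoke a corruption (discrepancy-under-a-hard-distribution) argument in the style of Razborov, as used by Fiorini et al.~\cite{jacm/FioriniMPTW15}: under a suitable distribution $\mu$ on the entries, every rectangle avoiding the $a^\intercal b=1$ pairs captures only a $2^{-\Omega(n)}$ fraction of the $\mu$-mass of the disjoint pairs, which forces $r=2^{\Omega(n)}$. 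Combined with the first step this yields $c=\Omega(n)$. I would also note that one may instead substitute the self-contained combinatorial counting of Kaibel and Weltge~\cite{dcg/KaibelW15}, which sidesteps the communication-complexity machinery and gives the same $2^{\Omega(n)}$ bound directly; either route completes the proof.
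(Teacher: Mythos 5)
The paper gives no proof of this lemma at all---it is imported as a black box from the cited works of Fiorini et al.\ \cite{jacm/FioriniMPTW15} and Kaibel--Weltge \cite{dcg/KaibelW15}---and your sketch correctly reconstructs exactly the argument of those sources: EF-protocol complexity $c$ yields a nonnegative factorization of $M^1_n$ into at most $2^c$ rank-one terms, the zeros of $M^1_n$ force each term's support to be a $1$-free rectangle, and the Razborov-style corruption bound (or Kaibel--Weltge's direct counting of disjoint pairs per $1$-free rectangle) shows that $2^{\Omega(n)}$ such rectangles are needed to cover the pairs with $a^\intercal b=0$, including your correct identification of the key subtlety that a naive support/fooling-set bound is insufficient because rectangles may mix disjoint and high-intersection pairs. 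So your proposal is correct and takes the same route as the references on which the paper's citation rests; the only sense in which it is incomplete is that the corruption lemma itself is invoked rather than proved, which mirrors the paper's own treatment of the entire statement.
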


For our purposes we need to work with matrices $N^k_n$ which we now show to require $\Omega(n-k)$  bits for $k\geqslant 1$ in any EF-protocol as well.

\begin{lemma}\label{lem:n0}
There exists an EF-protocol for computing $N^0_n$ that requires an exchange of $2\lceil\log{n}\rceil$ bits.	
\end{lemma}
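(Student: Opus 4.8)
The plan is to exploit an algebraic identity that turns the entry $N^0_n(a,b)$ into a nonnegative bilinear form with a clean product structure, and then to read a cheap EF-protocol directly off that structure. Since $a_i,b_i\in\{0,1\}$ we have $(a^\intercal b)^2=\sum_i a_ib_i+\sum_{i\neq j}a_ia_jb_ib_j$, and therefore $N^0_n(a,b)=(a^\intercal b)(a^\intercal b-1)=\sum_{i\neq j}a_ia_jb_ib_j$. The point of this rewriting is that each summand factors as $(a_ia_j)\cdot(b_ib_j)$, i.e. a quantity known to Alice times a quantity known to Bob, indexed by an ordered pair of distinct coordinates.

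Given this, I would let Alice do all of the sampling. Concretely, Alice uses her private random bits to draw $i$ and $j$ independently and uniformly from $[n]$. If $i=j$ or $a_ia_j=0$, the corresponding summand vanishes, so Alice simply outputs $0$ and sends nothing. Otherwise $i\neq j$ and $a_i=a_j=1$, in which case she transmits the pair $(i,j)$ to Bob, costing $\lceil\log n\rceil$ bits per coordinate and hence $2\lceil\log n\rceil$ bits in total; Bob then outputs $n^2\cdot b_ib_j$.

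Correctness and the bit count are then routine to verify. Sampling $(i,j)$ uniformly assigns each ordered pair weight $1/n^2$, so $\BE[X_{ab}]=\frac{1}{n^2}\sum_{i\neq j,\;a_i=a_j=1}n^2\,b_ib_j=\sum_{i\neq j}a_ia_jb_ib_j=N^0_n(a,b)$, and every possible output is nonnegative. The maximum number of bits ever exchanged over any inputs and any random choices is $2\lceil\log n\rceil$, attained exactly on the branch where a message is sent.

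The one point needing care — and the only real obstacle to hitting the bound $2\lceil\log n\rceil$ exactly rather than $2\lceil\log n\rceil+O(1)$ — is combining Alice's factor $a_ia_j$ with Bob's factor $b_ib_j$ in a single output, since the four relevant bits are split across the two players. A naive protocol would have Alice additionally send the bit $a_ia_j$, costing one extra bit. This is avoided by the silent-zero convention: Alice transmits $(i,j)$ only when $a_i=a_j=1$, so the mere arrival of the message certifies $a_ia_j=1$ to Bob, who can then produce $n^2\,b_ib_j$ with no further information about $a$; and when $a_ia_j=0$ Alice closes the round herself by outputting $0$. Because the complexity of an EF-protocol is the maximum number of bits exchanged, the cost-$0$ branch is free and the overall complexity is exactly $2\lceil\log n\rceil$.
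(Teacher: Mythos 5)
Your proposal is correct and follows essentially the same approach as the paper: Alice samples a random pair of indices, stays silent (outputting $0$) unless both of her coordinates are $1$, and otherwise sends the pair so Bob can output the inverse sampling probability times $b_ib_j$. The only cosmetic difference is that you sample ordered pairs and scale by $n^2$, whereas the paper samples unordered pairs and scales by $n(n-1)$; both yield the same expectation $N^0_n(a,b)$ and the same $2\lceil\log n\rceil$ bit bound.
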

\begin{proof}
	Alice and Bob have $a,b\in\{0,1\}^n$ respectively and they wish to output $a^\intercal b\cdot (a^\intercal b-1)$ in expectation. Alice selects two indices $i\neq j\in [n]$ uniformly at random. That is, the probability of any two indices being chosen is $1/\binom{n}{2}$. If $a_i=0$ or $a_j=0$, Alice outputs zero and the protocol ends. Otherwise $a_i=a_j=1$ and Alice sends the binary encoding of indices $i,j$ to Bob using $2\lceil\log{n}\rceil$ bits. If $b_i=b_j=1$, Bob outputs $n(n-1)$ otherwise he outputs zero, and the protocol ends.
	
	A non-zero value is output if and only if the indices selected by Alice satisfy $a_i=a_j=b_i=b_j=1$. This happens with probability $\binom{a^\intercal b}{2}/\binom{n}{2}$. Therefore the expected output is $n(n-1)\cdot\frac{\binom{a^\intercal b}{2}}{\binom{n}{2}}=N^0_n(a,b).$ 
	\qed
\end{proof}

\begin{lemma}\label{lem:nmatrixhard}
	Any EF-protocol for computing $N^k_n$ for $k\geqslant 1$ requires $\Omega(n-k)$ bits to be exchanged.
\end{lemma}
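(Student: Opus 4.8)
The plan is to deduce the hardness of $N^k_n$ from that of the disjointness matrix $M^1_{n-k+1}$ (Lemma \ref{lem:disjointnesshard}) by exhibiting $M^1_{n-k+1}$ as a \emph{cheap convex combination} of two matrices, one of which is a submatrix of $N^k_n$ and the other of which admits a short protocol by Lemma \ref{lem:n0}. Concretely, given an EF-protocol for $N^k_n$ of complexity $c$, I will assemble an EF-protocol for $M^1_{n-k+1}$ of complexity $\max\{c, O(\log n)\} + O(1)$, whence Lemma \ref{lem:disjointnesshard} forces $c = \Omega(n-k)$.

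The first step is a pointwise algebraic identity. Writing $s = x^\intercal y$ for $x,y \in \{0,1\}^m$, we have $N^0_m(x,y) = s(s-1)$, $N^1_m(x,y) = (s-1)(s-2)$, and $M^1_m(x,y) = (s-1)^2$, so that $N^0_m(x,y) + N^1_m(x,y) = (s-1)(2s-2) = 2(s-1)^2 = 2\,M^1_m(x,y)$. Hence, as matrices, $M^1_m = \tfrac12 N^0_m + \tfrac12 N^1_m$, a convex combination. The second step is a padding argument showing that $N^1_{n-k+1}$ is a submatrix of $N^k_n$: restrict the rows and columns of $N^k_n$ to indices of the form $a = (1^{\,k-1}, x)$ and $b = (1^{\,k-1}, y)$ with $x,y \in \{0,1\}^{n-k+1}$, where $k-1\geqslant 0$ uses the hypothesis $k\geqslant 1$. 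Then $a^\intercal b = (k-1) + x^\intercal y$, so $N^k_n(a,b) = (x^\intercal y - 1)(x^\intercal y - 2) = N^1_{n-k+1}(x,y)$. Consequently any EF-protocol for $N^k_n$ yields one for $N^1_{n-k+1}$ without exchanging extra bits: on input $x$ (resp.\ $y$) the players simply run the $N^k_n$ protocol on $(1^{\,k-1},x)$ (resp.\ $(1^{\,k-1},y)$).

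The third step is the standard composition of EF-protocols under convex combinations: to compute $\tfrac12 A + \tfrac12 B$ in expectation, Alice flips a fair coin and sends the one resulting bit to Bob, they then run the subprotocol for $A$ or for $B$ accordingly and output its value; the expected output is $\tfrac12 A + \tfrac12 B$ and the complexity is $\max\{\text{cplx}(A),\text{cplx}(B)\} + 1$. Applying this with $m = n-k+1$ to the submatrix protocol for $N^1_m$ (complexity $\leqslant c$) and the protocol for $N^0_m$ from Lemma \ref{lem:n0} (complexity $2\lceil\log m\rceil$) gives an EF-protocol for $M^1_m$ of complexity $\max\{c,\,2\lceil\log m\rceil\} + 1$. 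By Lemma \ref{lem:disjointnesshard} this must be $\Omega(m) = \Omega(n-k)$. Since $2\lceil\log m\rceil = o(n-k)$, for all sufficiently large $n-k$ the maximum is attained by $c$, so $c = \Omega(n-k)$; for bounded $n-k$ the claimed bound is vacuous.

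I expect the only genuinely creative point to be spotting the decomposition $N^0_m + N^1_m = 2\,M^1_m$, which lets the two "neighboring" zeros of $N$ average out to the single zero of $M^1$; everything else (the submatrix embedding and the logarithmic composition overhead) is routine. The one thing to be careful about is that the overhead terms -- the $O(\log n)$ cost of $N^0$ and the single coin bit -- are genuinely negligible against $\Omega(n-k)$, which is why the argument pins the cost on $c$ rather than on the auxiliary protocol.
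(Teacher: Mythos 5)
Your proof is correct and follows essentially the same route as the paper's: the identity $N^0_m + N^1_m = 2M^1_m$, the coin-flip composition with the cheap $N^0$ protocol of Lemma \ref{lem:n0}, and the padding of $k-1$ ones to embed $N^1_{n-k+1}$ into $N^k_n$ are all exactly the paper's ingredients. The only (immaterial) difference is ordering — the paper proves the $k=1$ case at size $n$ first and then pads, while you pad first and run the $k=1$ argument at size $n-k+1$.
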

\begin{proof}
	We first prove the result for the case of $k=1$.
	
	For any real number $\alpha$ it holds that $\alpha(\alpha-1)+(\alpha-1)(\alpha-2)=2(\alpha-1)^2$. Using $\alpha=a^\intercal b$ we get that for all $a,b\in\{0,1\}^n$ we have that $M^1_n(a,b)=\frac{N^0_n(a,b)+N^1_n(a,b)}{2}$. 
	
	Now, suppose that there exists an EF-protocol for computing $N^1_n$ that requires $c$ bits to be exchanged. To compute $M^1_n$, Alice decides whether to compute $N^0_n$ or $N^1_n$ (by tossing a fair coin) and tells this to Bob using one bit. They then proceed with the appropriate protocol. The expected value is clearly $\frac{N^0_n(a,b)+N^1_n(a,b)}{2}$ which equals $M^1_n(a,b)$ and they need to exchange only $1+\max\{\ab 2\lceil\log{n}\rceil, c\}$ bits. From Lemma \ref{lem:disjointnesshard} we know that $\Omega(n)$ bits may need to be exchanged for this. Therefore, $c=\Omega(n)$ and the lemma holds for $k=1$.	
	
	Now, let $a,b\in\{0,1\}^{n-k+1}$ be two binary strings of length $n-k+1$ for some $k\geqslant 2$. Obtain $a',b'\in\{0,1\}^{n}$ by appending $k-1$ ones to both. Then, $a'^\intercal b'=a^\intercal b + k - 1$ and so $(a'^\intercal b'-k)(a'^\intercal b'-k-1)=(a^\intercal b-1)(a^\intercal b-2)$. Hence, $N^1_{n-k+1}$ is a submatrix of $N^k_{n}$. It then follows that any EF-protocol for $N^k_{n}$ requires $\Omega(n-k+1)$ bits to be exchanged thus proving the lemma for all $k\geqslant 1$.
	\qed
\end{proof}

\subsection{Relaxations of \qappolytope{n}}
Recall the inequalities (\ref{qap5ineq}) of QAP\ref{qap5}:
\[\begin{tabular}{rcl}
$\displaystyle\sum_{ijkl}n_{ij}n_{kl}Y_{ij,kl} - (2\beta-1)\sum_{ij}n_{ij}Y_{ij,ij}$ & $\geqslant$ & $\frac{1}{4}-(\beta-\frac{1}{2})^2$
\end{tabular}\]

As noted in Section \ref{sec:relaxations} these are the most general family of valid inequalities for the QAP-polytope. Therefore any lower bound on the extension complexity of a relaxation \qaprelax{}{n} corresponding to any of the families QAP\ref{qap1}-QAP\ref{qap4} also hold for QAP\ref{qap5} and so we state our next theorem without proof.

\begin{theorem}\label{thm:extncompqap5}
	Let $\qaprelax{\ref{qap5}}{n}$ be any bounded relaxation of the QAP-polytope such that the  inequalities of QAP\ref{qap5} are valid for $\qaprelax{\ref{qap5}}{n}$. Then $\xc(\qaprelax{\ref{qap5}}{n})\geqslant 2^{\Omega(n)}$.
\end{theorem}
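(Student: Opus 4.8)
The plan is to follow the template set up earlier in this section: exhibit one of the hard matrices $N^k_n$ as a submatrix of a slack matrix of $\qaprelax{\ref{qap5}}{n}$ and then invoke Theorem \ref{thm:extformcc}. The one new ingredient needed is the slack of a QAP\ref{qap5} inequality at a permutation vertex $P^{[2]}_\sigma$. Writing $s=\sum_{ij}n_{ij}P_\sigma(i,j)=\sum_i n_{i,\sigma(i)}$ and using $P^{[2]}_\sigma(ij,kl)=P_\sigma(i,j)P_\sigma(k,l)$ together with $P_\sigma(i,j)^2=P_\sigma(i,j)$, the left-hand side of (\ref{qap5ineq}) collapses to $s^2-(2\beta-1)s$, while the right-hand side equals $\beta-\beta^2$. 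Hence the slack at $P^{[2]}_\sigma$ is exactly $s^2-(2\beta-1)s-(\beta-\beta^2)=(s-\beta)(s-\beta+1)$, a product of two consecutive integers (so automatically non-negative, as a slack must be). This is precisely the functional form of the matrices $N^k_n$ from Lemma \ref{lem:nmatrixhard}; taking $\beta=2$ gives slack $(s-2)(s-1)$, which matches $N^1_m(a,b)=(a^\intercal b-1)(a^\intercal b-2)$ as soon as $s$ is arranged to be the inner product $a^\intercal b$.

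To realize $s=a^\intercal b$, I would encode $m=\lfloor n/2\rfloor$ bits into permutations via disjoint blocks. Reserve distinct facility indices $i_1,\dots,i_m$ and $2m\leqslant n$ distinct location indices $j_1,j_1',\dots,j_m,j_m'$. For a column index $b\in\{0,1\}^m$ let $\sigma_b$ be any permutation with $\sigma_b(i_r)=j_r$ when $b_r=1$ and $\sigma_b(i_r)=j_r'$ when $b_r=0$; since these images are all distinct, $\sigma_b$ extends to a genuine permutation of $[n]$, chosen arbitrarily on the untouched indices. For a row index $a\in\{0,1\}^m$ let the QAP\ref{qap5} inequality have $\beta=2$, coefficients $n_{i_r,j_r}=a_r$ for $r\in[m]$, and $n_{ij}=0$ otherwise; as $\beta$ and all $n_{ij}$ are integers this is a bona fide member of the QAP\ref{qap5} family. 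With these choices $s=\sum_r n_{i_r,\sigma_b(i_r)}=\sum_r a_rb_r=a^\intercal b$, so by the computation above the slack of inequality $a$ at vertex $P^{[2]}_{\sigma_b}$ equals $(a^\intercal b-2)(a^\intercal b-1)=N^1_m(a,b)$.

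Since $\qaprelax{\ref{qap5}}{n}$ is a relaxation for which all QAP\ref{qap5} inequalities are valid, each of these $2^m$ inequalities is valid for $\qaprelax{\ref{qap5}}{n}$ and each $P^{[2]}_{\sigma_b}$ lies in it. Adjoining these inequalities as rows and these points as columns to any description of $\qaprelax{\ref{qap5}}{n}$ (permitted, as remarked when $S(P)$ was introduced) exhibits $N^1_m$ as a submatrix of a slack matrix of $\qaprelax{\ref{qap5}}{n}$. By Lemma \ref{lem:nmatrixhard} any EF-protocol for $N^1_m$ needs $\Omega(m-1)=\Omega(n)$ bits, so Theorem \ref{thm:extformcc} gives $\xc(\qaprelax{\ref{qap5}}{n})\geqslant 2^{\Omega(n)}$. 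The only real content here is the factorization of the slack as $(s-\beta)(s-\beta+1)$ and the block-encoding that converts $s$ into an inner product; I expect the sole delicate point to be the index bookkeeping that keeps the $i_r,j_r,j_r'$ distinct (so that $m=\Omega(n)$) and guarantees the partial maps always complete to permutations. Alternatively, since every QAP\ref{qap1} inequality is a special case of QAP\ref{qap5}, one could instead simply invoke the QAP\ref{qap1} lower bound proved elsewhere in this section; but the direct argument above is self-contained and explains why the bound holds.
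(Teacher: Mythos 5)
Your proposal is correct, but it takes a genuinely different route from the paper. The paper states this theorem \emph{without proof}: since every QAP\ref{qap1}--QAP\ref{qap4} inequality is a special case of QAP\ref{qap5}, any relaxation for which QAP\ref{qap5} is valid also has (say) QAP\ref{qap1} valid, so the bound follows from Theorem \ref{thm:extncompqap1}, which is proved later in the section via an EF-protocol reduction. You instead give a direct, self-contained argument, and it checks out: the slack of a general QAP\ref{qap5} inequality at $P^{[2]}_\sigma$ is indeed $s^2-(2\beta-1)s-(\beta-\beta^2)=(s-\beta)(s-\beta+1)$ with $s=\sum_i n_{i,\sigma(i)}$, which cleanly unifies the paper's individual slack computations (Lemmas \ref{lem:slack:qap1}, \ref{lem:slack:qap2}, \ref{lem:slack:qap4} are all special cases of this identity, up to the binomial factor of $\tfrac12$); your block encoding with locations $j_r,j_r'$ makes Alice's inequality depend only on $a$ and Bob's vertex only on $b$, so $N^1_m$ (with $m=\lfloor n/2\rfloor$) appears \emph{exactly} as a submatrix of a slack matrix of $\qaprelax{\ref{qap5}}{n}$, with no communication overhead and no factor-2 rescaling, and Theorem \ref{thm:extformcc} plus Lemma \ref{lem:nmatrixhard} finish the job. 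What each approach buys: the paper's inclusion argument is a one-liner but makes the QAP\ref{qap5} theorem logically dependent on the QAP\ref{qap1} proof (whose protocol also needs $O(\log n)$ extra bits, since Bob's permutation there depends on an index Alice must transmit); your argument is independent of the other theorems, exposes the structural reason the bound holds (the slack of any QAP\ref{qap5} inequality is a product of consecutive integers in $s$), and in fact yields the other four theorems' hard-submatrix structure as corollaries of one computation. You correctly flag the paper's route as the alternative in your last sentence.
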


Now consider any relaxation of \qappolytope{n} for which the inequalities of QAP\ref{qap1} are valid. 

\begin{lemma}\label{lem:slack:qap1}
	Let $i_1,i_2,\ldots,i_m,k$ be distinct indices in $[n]$, and $j_1,j_2,\ldots,j_m,l$ be all distinct 
	indices in $[n]$ as well. Let $\sigma\in S_n$ be such that $q$ indices $(i_r,j_r)$ satisfy $P_{\sigma}(i_r,j_r)=1$. Then, the slack of $\displaystyle\sum_{r=1}^{m}Y_{i_rj_r,kl} - Y_{kl,kl}- \sum_{r < s}Y_{i_rj_r,i_sj_s} \leqslant0$ with respect to $P^{[2]}_{\sigma}$ is $\binom{q-P_\sigma(k,l)}{2}.$
\end{lemma}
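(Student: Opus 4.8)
The goal is to compute the slack of the QAP1 inequality with respect to a vertex $P^{[2]}_{\sigma}$, where $\sigma$ fixes exactly $q$ of the special pairs $(i_r,j_r)$. Since the right-hand side of the inequality is $0$, the slack equals the negative of the left-hand side evaluated at $Y=P^{[2]}_{\sigma}$; that is,
$$\text{slack} = -\left(\sum_{r=1}^{m}P^{[2]}_{\sigma}(i_rj_r,kl) - P^{[2]}_{\sigma}(kl,kl) - \sum_{r<s}P^{[2]}_{\sigma}(i_rj_r,i_sj_s)\right).$$
The key computational fact I would use throughout is that $P^{[2]}_{\sigma}(ab,cd) = P_\sigma(a,b)\cdot P_\sigma(c,d)$, so each entry is simply a product of two $0/1$ indicators telling us whether $\sigma$ sends $a\mapsto b$ and $c\mapsto d$.

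**Evaluating the three terms.**

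First I would evaluate each sum separately in terms of $q$ and $P_\sigma(k,l)$. The middle term is the easiest: $P^{[2]}_{\sigma}(kl,kl) = P_\sigma(k,l)^2 = P_\sigma(k,l)$ since the entry is $0/1$. For the first sum, $\sum_{r}P_\sigma(i_r,j_r)P_\sigma(k,l) = P_\sigma(k,l)\cdot\sum_r P_\sigma(i_r,j_r) = q\cdot P_\sigma(k,l)$, because exactly $q$ of the pairs satisfy $P_\sigma(i_r,j_r)=1$. For the quadratic sum, $\sum_{r<s}P_\sigma(i_r,j_r)P_\sigma(i_s,j_s)$ counts the number of unordered pairs $\{r,s\}$ that are both fixed by $\sigma$, which is exactly $\binom{q}{2}$. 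Substituting these gives a left-hand side of $q\cdot P_\sigma(k,l) - P_\sigma(k,l) - \binom{q}{2}$, so the slack is $P_\sigma(k,l) - q\cdot P_\sigma(k,l) + \binom{q}{2}$.

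**Matching the claimed closed form.**

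The remaining task is to verify that this expression equals $\binom{q-P_\sigma(k,l)}{2}$ in both cases $P_\sigma(k,l)=0$ and $P_\sigma(k,l)=1$; since $P_\sigma(k,l)\in\{0,1\}$, a two-case check is complete. When $P_\sigma(k,l)=0$, the slack is $\binom{q}{2}=\binom{q-0}{2}$, as required. When $P_\sigma(k,l)=1$, the slack is $1 - q + \binom{q}{2} = \binom{q}{2}-(q-1) = \binom{q-1}{2}$, using the identity $\binom{q}{2}-\binom{q-1}{2}=q-1$; this matches $\binom{q-P_\sigma(k,l)}{2}$. Thus the slack is uniformly $\binom{q-P_\sigma(k,l)}{2}$.

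**Anticipated obstacle.**

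There is no deep obstacle here; the proof is a direct substitution followed by an elementary binomial identity. The one point requiring a little care is the counting argument for the quadratic term: I must be sure that the indices $i_r$ are distinct and the $j_r$ are distinct (which is guaranteed by the hypothesis of QAP1), so that $P_\sigma(i_r,j_r)=1$ happening for $q$ distinct values of $r$ genuinely yields $\binom{q}{2}$ fixed pairs, with no double counting or degenerate overlap among the pairs $(i_r,j_r)$. The distinctness also ensures the first-sum simplification is clean, since no pair $(i_r,j_r)$ coincides with $(k,l)$. I would state the two-case verification explicitly to make the unification into a single binomial formula transparent.
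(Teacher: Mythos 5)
Your proposal is correct and follows essentially the same route as the paper: the paper's proof also writes the slack as $P_\sigma(k,l)+\binom{q}{2}-qP_\sigma(k,l)$ and checks the two cases $P_\sigma(k,l)=0$ and $P_\sigma(k,l)=1$ to arrive at $\binom{q-P_\sigma(k,l)}{2}$. Your version merely spells out the term-by-term evaluation of $P^{[2]}_\sigma$ that the paper leaves implicit.
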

\begin{proof}
The slack is $P_\sigma(k,l)+\binom{q}{2}-qP_\sigma(k,l)$. If $P_\sigma(k,l)=0$ then this equals $\binom{q}{2}$. If $P_\sigma(k,l)=1$ then this equals $1+\binom{q}{2}-q=\frac{2+q(q-1)-2q}{2}=\binom{q-1}{2}.$
\qed
\end{proof}

\begin{theorem}\label{thm:extncompqap1}
	Let $\qaprelax{\ref{qap1}}{n}$ be any bounded relaxation of \qappolytope{n} such that the inequalities of QAP\ref{qap1} are valid for $\qaprelax{\ref{qap1}}{n}$. Then $\xc(\qaprelax{\ref{qap1}}{n})\geqslant2^{\Omega(n)}$.
\end{theorem}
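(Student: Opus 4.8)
The plan is to exhibit a scaled copy of the hard matrix $N^1_N$ for some $N=\Omega(n)$ as a submatrix of a slack matrix of $\qaprelax{\ref{qap1}}{n}$, and then invoke Theorem \ref{thm:extformcc} together with the hardness of $N^1_N$ from Lemma \ref{lem:nmatrixhard}. Since the QAP\ref{qap1} inequalities are valid for $\qaprelax{\ref{qap1}}{n}$ and the vertices $P^{[2]}_\sigma$ of \qappolytope{n} all lie inside $\qaprelax{\ref{qap1}}{n}$, we are free -- as remarked after the definition of the slack matrix -- to include these inequalities as (redundant) rows and these points as (redundant) columns of a slack matrix of $\qaprelax{\ref{qap1}}{n}$; the corresponding entries are then exactly the slacks computed in Lemma \ref{lem:slack:qap1}.

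First I would set up an index reservation. Fix $N=\lfloor (n-4)/2\rfloor=\Omega(n)$ and reserve pairwise distinct row indices $i_1,\dots,i_N,i^*_1,i^*_2,i^*_3,k$ and pairwise distinct column indices $j_1,\dots,j_N,j'_1,\dots,j'_N,j^*_1,j^*_2,j^*_3,l$, which is possible precisely because $n\geqslant 2N+4$. To each $b\in\{0,1\}^N$ I associate a permutation $\sigma_b$ defined by $\sigma_b(i_r)=j_r$ when $b_r=1$ and $\sigma_b(i_r)=j'_r$ when $b_r=0$, by $\sigma_b(k)=l$, by sending each $i^*_t$ to some column different from $j^*_t$, and by extending arbitrarily to a bijection on the remaining indices. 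All images are distinct by the choice of reserved symbols, so $\sigma_b$ is a genuine permutation with $P_{\sigma_b}(i_r,j_r)=b_r$, $P_{\sigma_b}(k,l)=1$, and $P_{\sigma_b}(i^*_t,j^*_t)=0$. To each $a\in\{0,1\}^N$ I associate the QAP\ref{qap1} inequality $h_a$ whose pair list consists of the variable pairs $\{(i_r,j_r): a_r=1\}$ together with the three fixed pairs $(i^*_t,j^*_t)$, and whose special pair is $(k,l)$. The three fixed pairs guarantee that the number of pairs is at least $3$ for every $a$, so $h_a$ is always a legitimate QAP\ref{qap1} inequality (and $n\geqslant 6$ for large $n$).

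The key computation is then immediate from Lemma \ref{lem:slack:qap1}. The number $q$ of pairs of $h_a$ satisfied by $\sigma_b$ equals the number of variable pairs with $a_r=b_r=1$ -- the three fixed pairs are never satisfied, since $P_{\sigma_b}(i^*_t,j^*_t)=0$ -- that is, $q=a^\intercal b$. As $P_{\sigma_b}(k,l)=1$, the slack of $h_a$ at $P^{[2]}_{\sigma_b}$ is $\binom{q-1}{2}=\tfrac12(a^\intercal b-1)(a^\intercal b-2)=\tfrac12 N^1_N(a,b)$. Hence the submatrix of the slack matrix with rows $\{h_a: a\in\{0,1\}^N\}$ and columns $\{P^{[2]}_{\sigma_b}: b\in\{0,1\}^N\}$ equals $\tfrac12 N^1_N$.

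Finally I would transfer the hardness. Any EF-protocol for $\tfrac12 N^1_N$ yields one for $N^1_N$ by doubling the (nonnegative) output without changing the communication, so by Lemma \ref{lem:nmatrixhard} any EF-protocol for this submatrix needs $\Omega(N)=\Omega(n)$ bits. Applying Theorem \ref{thm:extformcc} to this submatrix then gives $\xc(\qaprelax{\ref{qap1}}{n})\geqslant 2^{\Omega(n)}$. The one step that requires care -- and the main obstacle -- is arranging the auxiliary structure so that $q=a^\intercal b$ \emph{exactly}: the minimum-size requirement $m\geqslant 3$ of QAP\ref{qap1} must be met by pairs that are present in every inequality but never satisfied by any $\sigma_b$, because always-satisfied padding pairs would shift $q$ upward and push the slacks into the easy regime $N^{k}$ with $k\leqslant 0$ (recall that $N^0_N$ is computable with only $O(\log n)$ bits by Lemma \ref{lem:n0}). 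Reserving enough distinct indices to realize all of this simultaneously while keeping $N=\Omega(n)$ is the crux of the bookkeeping.
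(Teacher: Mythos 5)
Your proof is correct, and it takes a genuinely different route to the same three ingredients. Both arguments rest on Lemma \ref{lem:slack:qap1}, the hardness of $N^1$ (Lemma \ref{lem:nmatrixhard}), and Theorem \ref{thm:extformcc}, but the paper does not build a static submatrix: there, Alice and Bob receive $(a,b)$ as an instance of $N^1_n$, Alice forms the diagonal pairs $i_r{=}j_r{=}p_r$ on the support of $a$ together with $k{=}l{=}p$ for some $p$ outside the support, and then \emph{sends $p$ to Bob}, who modifies $b$ at position $p$ and builds a permutation fixing exactly the support of the modified vector. Since Bob's column depends on Alice's message, the paper's argument is a protocol reduction costing $O(\log n)$ extra bits (plus side sub-protocols for degenerate inputs: all-one $a$, vectors $b$ with fewer than three zeroes), and the lower bound is extracted through the protocol framework rather than from an explicit submatrix. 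Your reservation gadget --- alternative images $j'_r$ so that $\sigma_b$ is well defined for every $b$, a fixed pair $(k,l)$ with $\sigma_b(k)=l$ always, and three never-satisfied padding pairs $(i^*_t,j^*_t)$ --- eliminates all communication and all case analysis: the map $(a,b)\mapsto(h_a,\sigma_b)$ is a genuine product embedding, so $\tfrac12 N^1_N$ sits inside a slack matrix literally and Theorem \ref{thm:extformcc} applies verbatim, at the harmless price of the halved parameter $N=\lfloor (n-4)/2\rfloor$. Your padding also repairs a point the paper glosses over: in the paper's construction $m$ equals the number of ones in $a$, so for $a$ with one or two ones the constructed inequality is not a legitimate QAP\ref{qap1} inequality (the definition requires $m\geqslant 3$); your three always-present, never-satisfied pairs guarantee $m\geqslant 3$ for every row, and you correctly identify why the padding must be never-satisfied --- always-satisfied pairs would shift $q$ and land the slacks in the easy regime $N^k$ with $k\leqslant 0$.
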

\begin{proof}
	Suppose there exists an EF-protocol for computing the slack matrix of $\qaprelax{\ref{qap1}}{n}$ that requires at most $c$ bits to be exchanged. That is if Alice is given any valid inequality for $\qaprelax{\ref{qap1}}{n}$ and Bob any feasible point in $\qaprelax{\ref{qap1}}{n}$ they can compute the corresponding slack in expectation by exchanging at most $c$ bits. We will show that they can modify this EF-protocol to get an EF-protocol for matrix $N^1_n$ with at most $O(c+\log n)$ bits exchanged. By Lemma \ref{lem:nmatrixhard} this requires $\Omega(n)$ bits to be exchanged and so $c=\Omega(n)$. Finally, applying Theorem \ref{thm:extformcc} we will get that $\xc(\qaprelax{\ref{qap1}}{n})\geqslant2^{\Omega(n)}$.

	So suppose, Alice and Bob get $a,b\in\{0,1\}^n$ respectively and wish to compute $N^1_n(a,b)=(a^\intercal b-1)(a^\intercal b-2)$ in expectation. We can assume that Alice receives neither the all zero nor the all one vector. If $a=(0,\ldots,0)$ then she can output zero and stop. If $a=(1,\ldots,1)$ then she  can tell Bob this using one bit and Bob can output the number of nonzero entries in $b$. Further, we can assume that the vector $b$ contains at least three zero entries. Otherwise Bob can tell Alice using at most $2\log n$ bits the indices where $b$ is zero and Alice can output the correct value.
	
	Let $p_1,\ldots,p_m$ be the indices where $a$ is non-zero and let $p$ be an arbitrary index such that $a_p=0$. Alice creates the inequality corresponding to sets $i_1,\ldots,i_m,k$ and $j_1,\ldots,j_m,l$ where $i_1{=}j_1{=}p_1,\ldots,i_m{=}j_m{=}p_m$ and $k=l=p$. Alice then sends the index $p$ to Bob who sets $b_p=1$ if it is not already so. Bob then creates any permutation $\sigma_b$ such that $\sigma_b(i)=i$ if $b_i=1$ and $\sigma_b(i)\neq i$ if $b_i=0$. This is clearly possible since $b$ still contains at least two zeroes. Bob selects the vertex $P^{[2]}_{\sigma_b}$ of \qappolytope{n} corresponding to this permutation. Clearly, $a^\intercal b$ equals the number of index pairs $(i_r,j_r)$ in the set created by Alice such that $P^{[2]}_{\sigma_b}(i_r,j_r)=1$. 

	Using $P_{\sigma}(k,l)=P_{\sigma_b}(p,p)=1$  and
	$q=a^\intercal b$ in Lemma \ref{lem:slack:qap1} we see that the slack of Alice's inequality with respect to Bob's vertex of \qappolytope{n} is exactly $\binom{a^\intercal b -1}{2}=\frac{1}{2}N^1_n(a,b)$ and hence they can just use the protocol for computing the slack matrix of $\qaprelax{\ref{qap1}}{n}$ for computing $N^1_n$ by agreeing that every time they wish to output something they would output twice as much. 
	\qed
\end{proof}

\medskip

Next, consider any relaxation of \qappolytope{n} for which the inequalities of QAP\ref{qap2} are valid. Recall the inequalities (\ref{qap2ineq}) of QAP\ref{qap2}:
\[\begin{tabular}{rcl}
$\displaystyle(\beta-1)\sum_{(ij)\in P\times Q}Y_{ij,ij}-\sum_{\substack{(ij),(kl)\in P\times Q\\i<k}}Y_{ij,kl}$ & $\leqslant$ & $\frac{\beta^2-\beta}{2}$
\end{tabular}\]
\begin{lemma}\label{lem:slack:qap2}
	Let $h$ be an inequality in QAP\ref{qap2} given by sets $P,Q \subset [n]$ and $\beta\in\BN$. Let $\sigma\in S_n$ such that $q$ index pairs $(i,j)$ in $P\times Q$ satisfy $P_\sigma(i,j)=1$. Then, the slack of $h$ with respect to $P^{[2]}_\sigma$ is $\binom{q-(\beta-1)}{2}.$
\end{lemma}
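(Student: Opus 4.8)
The plan is to evaluate the slack directly, which for an inequality written as $\mathrm{LHS}(Y)\leqslant\frac{\beta^2-\beta}{2}$ equals $\frac{\beta^2-\beta}{2}-\mathrm{LHS}(P^{[2]}_\sigma)$. First I would substitute the vertex $Y=P^{[2]}_\sigma$ using the defining identity $P^{[2]}_\sigma(ij,kl)=P_\sigma(i,j)\cdot P_\sigma(k,l)$. Since every entry of a permutation matrix is $0$ or $1$, the diagonal terms satisfy $Y_{ij,ij}=P_\sigma(i,j)^2=P_\sigma(i,j)$, so $\sum_{(ij)\in P\times Q}Y_{ij,ij}$ simply counts the index pairs $(i,j)\in P\times Q$ with $P_\sigma(i,j)=1$, which is exactly $q$ by hypothesis.

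The key structural observation concerns the off-diagonal sum $\sum_{(ij),(kl)\in P\times Q,\,i<k}P_\sigma(i,j)P_\sigma(k,l)$. A product $P_\sigma(i,j)P_\sigma(k,l)$ equals $1$ precisely when both $(i,j)$ and $(k,l)$ lie among the $q$ counted pairs. Because $\sigma$ is a permutation, for each row index $i$ there is at most one column $j$ with $P_\sigma(i,j)=1$; hence the $q$ counted pairs have pairwise distinct row indices. Consequently any two distinct counted pairs satisfy $i\neq k$, so exactly one of the two orderings meets the constraint $i<k$, and the off-diagonal sum collapses to $\binom{q}{2}$.

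Combining these, $\mathrm{LHS}(P^{[2]}_\sigma)=(\beta-1)q-\binom{q}{2}$, and therefore the slack is $\frac{\beta^2-\beta}{2}-(\beta-1)q+\binom{q}{2}$. The final step is the routine algebraic verification that this expression equals $\binom{q-(\beta-1)}{2}$: expanding $\binom{q}{2}=\frac{q^2-q}{2}$ and $\binom{q-(\beta-1)}{2}=\frac{(q-\beta+1)(q-\beta)}{2}$ and collecting terms confirms the identity. There is no genuine obstacle here; the only point demanding care is the distinct-rows argument for the cross term, which is exactly what reduces the double sum to a single binomial coefficient and plays the same role that $\binom{q}{2}$ plays in the proof of Lemma \ref{lem:slack:qap1}.
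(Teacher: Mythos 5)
Your proof is correct and follows essentially the same route as the paper: both evaluate the slack as $\frac{\beta^2-\beta}{2}-(\beta-1)q+\binom{q}{2}$ and then verify algebraically that this equals $\binom{q-(\beta-1)}{2}$. The only difference is that you spell out why the diagonal sum equals $q$ and why the cross sum collapses to $\binom{q}{2}$ (the distinct-rows argument), a step the paper's proof takes as immediate.
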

\begin{proof}
The slack of $h$ with respect to $P^{[2]}_\sigma$ is 
\begin{align}
&~ \frac{\beta^2-\beta}{2}-(\beta-1)q+\binom{q}{2}\nonumber\\
= &~ \frac{\beta(\beta-1)-2q(\beta-1)+q(q-1)}{2}\nonumber\\
= &~ \frac{(\beta-1)(\beta-q)+q(q-\beta)}{2}\nonumber\\
= &~ \frac{(q-\beta)(q-\beta+1)}{2}\nonumber\\
= &~ \binom{q-(\beta-1)}{2}
\end{align}
\qed
\end{proof}
\begin{theorem}\label{thm:extncompqap2}
		Let $\qaprelax{\ref{qap2}}{n}$ be any bounded relaxation of \qappolytope{n} such that the inequalities of QAP\ref{qap2} are valid for $\qaprelax{\ref{qap2}}{n}$. Then $\xc(\qaprelax{\ref{qap2}}{n})\geqslant2^{\Omega(n)}$.
\end{theorem}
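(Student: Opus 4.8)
The plan is to mirror the proof of Theorem \ref{thm:extncompqap1}: assume an EF-protocol for the slack matrix of $\qaprelax{\ref{qap2}}{n}$ using at most $c$ bits, manufacture from it an EF-protocol for a hard matrix $N^1_{n'}$ with $n' = \Theta(n)$ using $c + O(\log n)$ bits, and then invoke Lemma \ref{lem:nmatrixhard} (which forces $c = \Omega(n)$) together with Theorem \ref{thm:extformcc} to conclude $\xc(\qaprelax{\ref{qap2}}{n}) \geqslant 2^{\Omega(n)}$. The engine of the reduction is Lemma \ref{lem:slack:qap2}: if Alice's QAP\ref{qap2} inequality uses $\beta = 2$ and Bob's vertex $P^{[2]}_\sigma$ has exactly $q$ pairs $(i,j) \in P \times Q$ with $P_\sigma(i,j) = 1$, then the slack equals $\binom{q - 1}{2} = \frac{1}{2}(q-1)(q-2)$. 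Hence it suffices to arrange, from inputs $a$ (Alice) and $b$ (Bob), a choice of $P, Q$ and a permutation $\sigma$ so that $q = a^\intercal b$; the slack is then $\frac12 N^1_{n'}(a,b)$, and doubling every output turns the slack protocol into a protocol for $N^1_{n'}$.

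The encoding is where the work lies. Set $n' = \lfloor (n-2)/2 \rfloor$ and take the hard instance $a, b \in \{0,1\}^{n'}$. View $[n]$ as a left block $\{1, \dots, n'\}$, a right block $\{n'+1, \dots, 2n'\}$, and a leftover block. Alice sets $\beta = 2$, $P = \{\, i \leqslant n' : a_i = 1 \,\}$, and $Q = \{n'+1, \dots, 2n'\}$. Bob builds a permutation $\sigma_b$ that routes a left index into the right block exactly when $b_i = 1$: put $\sigma_b(i) = i + n'$ if $b_i = 1$ and $\sigma_b(i) = i$ if $b_i = 0$ for $i \leqslant n'$, and extend $\sigma_b$ to a bijection on the remaining indices (a counting check shows the values already used and the images still to be assigned match up, so the extension always exists and needs no communication). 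For $i \in P$ we then have $\sigma_b(i) \in Q$ iff $b_i = 1$, so $q = |\{\, i \in P : \sigma_b(i) \in Q \,\}| = a^\intercal b$, as required. Crucially, Alice builds her inequality from $a$ alone and Bob his vertex from $b$ alone, so in the main case they simply feed these as their inputs to the assumed slack protocol with no extra communication.

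It remains to check feasibility and edge cases and then conclude. With $\beta = 2$ the conditions require $3 \leqslant |P|, |Q| \leqslant n - 3$ and $|P| + |Q| \leqslant n - 1$. Here $|Q| = n'$ and $|P| \leqslant n'$, so $|P| + |Q| \leqslant 2n' \leqslant n - 2$, which is precisely the reason for embedding into roughly $n/2$ coordinates rather than all of $[n]$; the bounds $3 \leqslant |Q| \leqslant n-3$ and $|P| \leqslant n - 3$ hold for all large $n$. The only remaining case is $|P| < 3$, i.e. $a$ has at most two ones: then Alice sends those (at most two) positions using $O(\log n)$ bits and Bob outputs $(a^\intercal b - 1)(a^\intercal b - 2)$ directly. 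This produces an EF-protocol for $N^1_{n'}$ with $c + O(\log n)$ bits; since $N^1_{n'}$ requires $\Omega(n') = \Omega(n)$ bits by Lemma \ref{lem:nmatrixhard}, we get $c = \Omega(n)$ and the theorem follows from Theorem \ref{thm:extformcc}. I expect the main obstacle to be honoring condition (ii): it rules out the naive "diagonal" encoding $P = Q$ and forces the left/right split, which is exactly the device that makes $q$ equal $a^\intercal b$ cleanly while keeping $|P| + |Q|$ below $n-1$.
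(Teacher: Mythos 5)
Your proof is correct and follows essentially the same route as the paper: reduce from the hardness of $N^1_m$ (Lemma \ref{lem:nmatrixhard}) via Lemma \ref{lem:slack:qap2} with $\beta=2$, embedding the hard instance into roughly $n/2$ coordinates, with Alice's inequality built from $a$ alone and Bob's permutation built from $b$ alone so that $q=a^\intercal b$, then doubling outputs and invoking Theorem \ref{thm:extformcc}. The only (cosmetic) difference is that the paper takes $P=Q=\{i : a_i=1\}$ with $n=2m+1$ and has Bob fix $i$ when $b_i=1$ and displace it to $i+m$ when $b_i=0$ --- so, contrary to your closing remark, condition (ii) does not rule out the diagonal encoding (since $2|P|\leqslant n-1=n-3+\beta$); your left/right split is an equally valid variant of the same device.
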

\begin{proof} 

	Suppose Alice and Bob get $a,b\in\{0,1\}^m$ respectively and wish to compute $N^1_m(a,b)$ in expectation. 
	Let $n=2m+1$ and let $p_1,\ldots,p_t$ be the indices where $a$ is non-zero. Alice creates the inequality corresponding to sets $P=Q=\{p_1,\ldots,
	p_t\}$ and $\beta=2$. Note that $|P|\leqslant m=(n-1)/2$ so $|P|+|Q|=2|P|\leqslant n-1=n-3+\beta$. So the value of $\beta=2$
	the choices of $P,Q,\beta$ satisfy the conditions of QAP\ref{qap2} for \qappolytope{n}.
	
	Bob creates the following permutation $\sigma_b\in S_{n}$ 
	\[\sigma_b(i)=
	\begin{cases}
		i & \quad \text{ if } 1\leqslant i \leqslant m \text{ and } b_i=1,\\
		i+m & \quad \text{ if } 1\leqslant i \leqslant m \text{ and } b_i=0,\\
		i-m & \quad \text{ if } m<i \leqslant 2m+1 \text{ and } b_i=0,\\
		i & \quad \text{ otherwise. }
	\end{cases}
	\]
	and selects the vertex $P^{[2]}_{\sigma_b}$ of the \qappolytope{n}. Since $\qaprelax{\ref{qap2}}{n}$ is a relaxation of \qappolytope{n}, Bob has picked a feasible point of $\qaprelax{\ref{qap2}}{n}$. Notice that $a^\intercal b$ equals the number of index pairs $(i,j)$ in the set $P\times Q$ such that $P_{\sigma_b}(i,j)=1$. 
	
	Therefore using $q=a^\intercal b$ and $\beta=2$ in Lemma \ref{lem:slack:qap2} we get that the slack of Alice's inequality with respect to Bob's feasible point of $\qaprelax{\ref{qap2}}{n}$ is ${a^\intercal b-(\beta-1)\choose 2}=\frac{a^\intercal b-1)(a^\intercal b-2)}{2}=\frac{1}{2}N^1_m(a,b)$
	and hence they can just use the protocol for 
	computing the slack matrix of $\qaprelax{\ref{qap2}}{n}$ for computing $N^1_m$ by agreeing that every time they wish to output something they 
	would output twice as much. 
	\qed
\end{proof}

We would like to remark that in the proof of the previous Theorem one could have as easily used a value of $\beta$ other than two as long as $\beta\leqslant \alpha n$ for some $0 <\alpha <1$. So the extension complexity for QAP\ref{qap2} remains superpolynomial even for any fixed value of $\beta$ if $\beta\leqslant\alpha n$ for some $0<\alpha<1$.

\medskip

Next, consider any relaxation of \qappolytope{n} for which the inequalities of QAP\ref{qap3} are valid. Recall the inequalities (\ref{qap3ineq}) of QAP\ref{qap3}:
\[\begin{tabular}{rrrrcl}
$-$ & $\displaystyle(\beta-1)\sum_{(ij)\in P_1\times Q}Y_{ij,ij}$ & $+$ & $\displaystyle\beta \sum_{(ij)\in P_2\times Q}Y_{ij,ij} $ & &\\
$+$ & $\displaystyle\sum_{\substack{i<k \\ (ij)\in P_1\times Q\\ (kl)\in P_1\times Q}}Y_{ij,kl}$ & $+$ &$\displaystyle\sum_{\substack{i<k \\ (ij)\in P_2\times Q\\(kl)\in P_2\times Q}} Y_{ij,kl}$ & & \\
& & $-$ & $\displaystyle\sum_{\substack{(ij)\in P_1\times Q \\ (kl)\in P_2\times Q}}Y_{ij,kl}$ & $\geqslant$ & $ \frac{\beta-\beta^2}{2}$
\end{tabular}\]

\begin{lemma}\label{lem:slack:qap3}
	Consider an inequality $h$ in QAP\ref{qap3} given by sets $P_1,P_2,Q \subset [n]$ and $\beta\in\BN$. Let $\sigma\in S_n$ be such that $q_1$ index pairs $(i,j)$ in $P_1\times Q$ satisfy $P_\sigma(i,j)=1$ and $q_2$ index pairs $(i,j)$ in $P_2\times Q$ satisfy $P_\sigma(i,j)=1.$ Then, the slack of $h$ with respect to $P^{[2]}_\sigma$ is $\binom{q_1-(\beta-1)}{2}+\frac{2q_2\beta+q_2(q_2-1)-2q_1q_2}{2}.$
\end{lemma}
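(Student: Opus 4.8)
The plan is to substitute $Y=P^{[2]}_\sigma$ directly into the left-hand side of the recalled form of QAP\ref{qap3} and evaluate each of its five sums in terms of $q_1$ and $q_2$. Recalling that $P^{[2]}_\sigma(ij,kl)=P_\sigma(i,j)\cdot P_\sigma(k,l)$ and that every entry $P_\sigma(i,j)\in\{0,1\}$, the two diagonal sums are immediate: since $Y_{ij,ij}=P_\sigma(i,j)^2=P_\sigma(i,j)$, the definitions of $q_1,q_2$ give $\sum_{(ij)\in P_1\times Q}Y_{ij,ij}=q_1$ and $\sum_{(ij)\in P_2\times Q}Y_{ij,ij}=q_2$.

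The crux is evaluating the three off-diagonal quadratic sums, and here I would exploit the permutation structure of $\sigma$. A pair $(i,j)$ is ``active'' (contributes a factor $P_\sigma(i,j)=1$) exactly when $\sigma(i)=j$, so the active pairs inside a block $P_r\times Q$ have pairwise distinct first coordinates $i$. Consequently the within-block sum $\sum_{i<k,\,(ij),(kl)\in P_1\times Q}Y_{ij,kl}$ is in bijection with the two-element subsets of the $q_1$ active pairs of $P_1\times Q$ (each such subset counted once under the ordering $i<k$), giving $\binom{q_1}{2}$; symmetrically the $P_2$-block sum gives $\binom{q_2}{2}$. The cross sum carries no ordering restriction, so it counts all $q_1 q_2$ ordered pairs consisting of one active pair from $P_1\times Q$ and one from $P_2\times Q$ (disjointness of $P_1,P_2$ causes no double counting), and hence equals $q_1 q_2$.

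Assembling these, the left-hand side evaluates to $-(\beta-1)q_1+\beta q_2+\binom{q_1}{2}+\binom{q_2}{2}-q_1 q_2$, and the slack is this minus the right-hand side $\tfrac{\beta-\beta^2}{2}$, i.e.\ plus $\tfrac{\beta^2-\beta}{2}$. I would then isolate the terms depending only on $q_1$ and $\beta$, namely $\tfrac{\beta^2-\beta}{2}-(\beta-1)q_1+\binom{q_1}{2}$, and recognize — via the very algebraic identity already verified in the proof of Lemma \ref{lem:slack:qap2} — that this equals $\binom{q_1-(\beta-1)}{2}$. The remaining terms $\beta q_2+\binom{q_2}{2}-q_1 q_2$ are put over a common denominator as $\tfrac{2q_2\beta+q_2(q_2-1)-2q_1 q_2}{2}$, which yields exactly the claimed expression.

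The only genuinely non-routine point is the combinatorial evaluation of the quadratic sums: one must justify that the active pairs in each block $P_r\times Q$ have distinct first coordinates, so that the $i<k$-restricted sums collapse to binomial coefficients and the unrestricted cross sum collapses to a product. Once that is established, the rest is a direct substitution followed by the algebraic regrouping that simply reuses the identity from Lemma \ref{lem:slack:qap2}, so I expect no further difficulty.
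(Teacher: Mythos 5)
Your proposal is correct and follows essentially the same route as the paper: substitute $P^{[2]}_\sigma$, evaluate the five sums as $q_1$, $q_2$, $\binom{q_1}{2}$, $\binom{q_2}{2}$, $q_1q_2$, and then regroup algebraically into $\binom{q_1-(\beta-1)}{2}+\frac{2q_2\beta+q_2(q_2-1)-2q_1q_2}{2}$. The only cosmetic differences are that you justify the combinatorial evaluation of the quadratic sums explicitly (the paper treats it as immediate) and you cite the identity from Lemma~\ref{lem:slack:qap2} where the paper simply repeats that algebra inline.
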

\begin{proof}
The slack of $h$ with respect to $P^{[2]}_\sigma$ is 
	\begin{align}
	& ~-(\beta-1)q_1 +\beta q_2+\binom{q_1}{2}+\binom{q_2}{2}-q_1q_2-\frac{\beta(1-\beta)}{2}\nonumber\\
	= &~ \frac{q_1(q_1-1)-q_1(\beta-1)-q_1(\beta-1)-\beta(1-\beta)}{2}+\frac{2q_2\beta+q_2(q_2-1)-2q_1q_2}{2}\nonumber\\
	= &~ \frac{q_1(q_1-\beta)-(\beta-1)(q_1-\beta)}{2}+\frac{2q_2\beta+q_2(q_2-1)-2q_1q_2}{2}\nonumber\\
	= &~ \frac{(q_1-\beta+1)(q_1-\beta)}{2}+\frac{2q_2\beta+q_2(q_2-1)-2q_1q_2}{2}\nonumber\\
	= &~ \binom{q_1-(\beta-1)}{2}+\frac{2q_2\beta+q_2(q_2-1)-2q_1q_2}{2}\label{slack:qap3}
	\end{align}	
\qed
\end{proof}

\begin{theorem}\label{thm:extncompqap3}
		Let $\qaprelax{\ref{qap3}}{n}$ be any bounded relaxation of \qappolytope{n} such that the inequalities of QAP\ref{qap3} are valid for $\qaprelax{\ref{qap3}}{n}$. Then $\xc(\qaprelax{\ref{qap3}}{n})\geqslant2^{\Omega(n)}$.
	\end{theorem}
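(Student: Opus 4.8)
The plan is to follow the proof of Theorem~\ref{thm:extncompqap2} essentially verbatim, using the fact that QAP\ref{qap3} contains QAP\ref{qap2} as the special case $P_2=\emptyset$. The crucial observation is that in the slack expression of Lemma~\ref{lem:slack:qap3}, namely $\binom{q_1-(\beta-1)}{2}+\frac{2q_2\beta+q_2(q_2-1)-2q_1q_2}{2}$, the second summand vanishes identically when $q_2=0$, leaving precisely the QAP\ref{qap2} slack $\binom{q_1-(\beta-1)}{2}$. So if I can choose a legitimate QAP\ref{qap3} inequality and a vertex $P^{[2]}_{\sigma_b}$ of \qappolytope{n} with $q_2=0$ and $q_1=a^\intercal b$, the reduction from the hard matrix $N^1_m$ of Lemma~\ref{lem:nmatrixhard} transfers with no change.

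Concretely, let Alice and Bob hold $a,b\in\{0,1\}^m$ and wish to compute $N^1_m(a,b)$ in expectation. I would set $n=2m+2$, $\beta=2$, take $P_1=Q$ to be the support $\{p_1,\dots,p_t\}$ of $a$ exactly as in Theorem~\ref{thm:extncompqap2}, and let $P_2=\{2m+2\}$ be a single reserved index disjoint from everything else. Bob uses the same permutation $\sigma_b$ as in the QAP\ref{qap2} reduction on $[2m+1]$, which already guarantees that the number of pairs $(i,j)\in P_1\times Q$ with $P_{\sigma_b}(i,j)=1$ equals $a^\intercal b$, so $q_1=a^\intercal b$; he additionally fixes the index $2m+2$, which forces $q_2=0$ since $P_2\times Q=\{(2m+2,w):w\in Q\}$ and $\sigma_b(2m+2)=2m+2\notin Q$. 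By Lemma~\ref{lem:slack:qap3} the slack of Alice's inequality at $P^{[2]}_{\sigma_b}$ is then $\binom{a^\intercal b-1}{2}=\tfrac12 N^1_m(a,b)$, so an EF-protocol for the slack matrix of $\qaprelax{\ref{qap3}}{n}$ yields one for $N^1_m$ after a constant output rescaling.

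It then remains to verify that the choice $(P_1,P_2,Q,\beta)$ meets the side conditions (i)--(vi) of QAP\ref{qap3}. With $\beta=2$ and $|P_2|=1$, conditions (iii) and (iv) read $|P_1|\geqslant2$ and $|P_2|\geqslant0$, and (vi) reduces to $|Q|\geqslant3$, matching (i); condition (v) becomes $|t-3|\leqslant n-t-4$, i.e.\ $t\leqslant m$, which holds since $t\leqslant m$; and (ii) is $t+1\leqslant n-3$, valid for $m\geqslant3$. The degenerate inputs where $a$ has support $t<3$ (so $|Q|<3$ is infeasible) are peeled off and answered directly by exchanging $O(\log n)$ bits, exactly as the extremal cases are dispatched in the proof of Theorem~\ref{thm:extncompqap1}. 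With a legitimate inequality in hand, Lemma~\ref{lem:nmatrixhard} forces any EF-protocol for $N^1_m$ to exchange $\Omega(m)=\Omega(n)$ bits, hence so must any EF-protocol for the slack matrix of $\qaprelax{\ref{qap3}}{n}$, and Theorem~\ref{thm:extformcc} yields $\xc(\qaprelax{\ref{qap3}}{n})\geqslant2^{\Omega(n)}$.

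The main obstacle I anticipate is not conceptual but a bookkeeping one: because the QAP\ref{qap3} slack depends on \emph{two} counts $q_1,q_2$ rather than the single count of QAP\ref{qap1} and QAP\ref{qap2}, I must certify that one permutation $\sigma_b$ simultaneously realizes $q_1=a^\intercal b$ and $q_2=0$ while $(P_1,P_2,Q,\beta)$ stays inside the comparatively narrow feasibility window (v)--(vi). Reserving a single inert index for $P_2$ is what makes $q_2=0$ available for free; confirming (v)--(vi) for this configuration is the only genuinely delicate step, and it goes through with the slack of room left by taking $n$ a constant larger than $2m$.
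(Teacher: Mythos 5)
Your proposal is correct and takes essentially the same route as the paper: both reduce from the hard matrix $N^1_m$ using Lemma \ref{lem:slack:qap3} with $\beta=2$, $P_1=Q$ equal to the support of $a$, a singleton $P_2$ engineered so that $q_2=0$, Bob's permutation fixing exactly the coordinates with $b_i=1$, and the factor-two output rescaling before invoking Lemma \ref{lem:nmatrixhard} and Theorem \ref{thm:extformcc}. The only differences are bookkeeping: you reserve a fresh index $2m+2$ for $P_2$ (so no communication or all-ones caveat is needed for it), whereas the paper places $P_2$ at a zero-coordinate of $a$ inside $[m]$ and has Alice send it with $\log n$ bits, and you verify conditions (i)--(vi) and dispatch the small-support inputs explicitly where the paper only asserts this.
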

\begin{proof}

	Suppose Alice and Bob get $a,b\in\{0,1\}^m$ respectively and wish to compute $N^1_m(a,b)$ in expectation. We can assume that $a$ is not all one vector. If not, she can send this information to Bob using a single bit who can output the number of ones in vector $b$. 
	Let $P_1=Q=\{i\mid a_i=1\}$ and let $P_2\subseteq [m]$ be such that $P_1\cap P_2=\emptyset$ and $|P_2|=1$. Clearly, $i\in P_2 \implies a_i=0$. 
	Alice creates the inequality corresponding to the sets $P_1,P_2,Q$ and $\beta=2$. One can verify that all the conditions of
	QAP\ref{qap3} are satisfied for the above choice of $P_1,P_2,Q$ and $\beta$. Alice tells Bob the set $P_2$ using $\log n$ bits.
	
	Bob creates the following permutation $\sigma_b\in S_{2m+1}$ 
\[\sigma_b(i)=
\begin{cases}
i & \quad \text{ if } i\in [m]\setminus P_2 \text{ and } b_i=1,\\
i+m & \quad \text{ if } i\in P_2,\\
i+m & \quad \text{ if } 1\leqslant i \leqslant m \text{ and } b_i=0,\\
i-m & \quad \text{ if } m<i \leqslant 2m+1 \text{ and } b_i=0,\\
i-m & \quad \text{ if } i-m\in P_2,\\
i & \quad \text{ otherwise. }
\end{cases}
\]
and selects the vertex $P^{[2]}_{\sigma_b}$ of the \qappolytope{n}. Since $\qaprelax{\ref{qap2}}{n}$ is a relaxation of \qappolytope{n}, Bob has picked a feasible point of $\qaprelax{\ref{qap2}}{n}$. Notice that $a^\intercal b$ equals the number of index pairs $(i,j)\in P_1\times Q$ such that $P_{\sigma_b}(i,j)=1$. Also, notice that the number of index pairs $(i,j)\in P_2\times Q$ such that $P_{\sigma_b}(i,j)=1$ is zero. 

Using $q_1=a^\intercal b, q_2=0$ and $\beta=2$ in Lemma \ref{lem:slack:qap3} we get that the slack of Alice's inequality with respect to Bob's point 
	of $\qaprelax{\ref{qap3}}{2m+1}$ is ${a^\intercal b-(\beta-1)\choose 2}={a^\intercal b-1\choose 2}=\frac{1}{2}N^1_{2m+1}(a,b)$. Therefore, Alice
	and Bob can just use the protocol for computing the slack matrix of $\qaprelax{\ref{qap3}}{2m+1}$ for computing $N^1_m(a,b)$ by agreeing that every 
	time they wish to output something they would output twice as much. 
	\qed
\end{proof}

Once again we remark that the previous proof can be easily modified for any other value of $\beta$ as long as $\beta\leqslant \alpha n$ for some $0< \alpha < 1$. So the extension complexity lower bound also holds for each subfamily of inequalities obtained for any such fixed $\beta$.

\medskip

Finally, consider any relaxation of \qappolytope{n} for which the inequalities of QAP\ref{qap4} are valid. Recall the inequalities (\ref{qap4ineq}) of QAP\ref{qap4}:
\[\begin{tabular}{rcl}
$\displaystyle\sum_{r=1}^{m}Y_{i_rj_r,i_rj_r}-\sum_{r < s}Y_{i_rj_r,i_sj_s}$ & $\leqslant$ & $1$
\end{tabular}\]

\begin{lemma}\label{lem:slack:qap4}
	Let $h$ be an inequality in QAP\ref{qap4} given by $i_1,\ldots,i_m$ all distinct and $j_1,\ldots,j_m$ all distinct. Let $\sigma\in S_n$ with $q$ index pairs $(i_r,j_r)$ such that $P_\sigma(i_r,j_r)=1.$ Then, the slack of $h$ with respect to $P^{[2]}_\sigma$ is $\binom{q-1}{2}.$
\end{lemma}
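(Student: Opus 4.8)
The plan is to substitute the vertex $P^{[2]}_\sigma$ directly into the left-hand side of the QAP\ref{qap4} inequality and simplify, exactly as was done for the analogous slack computations in Lemma \ref{lem:slack:qap1} and Lemma \ref{lem:slack:qap2}. Throughout I would use that $P^{[2]}_\sigma(ab,xy) = P_\sigma(a,b)\cdot P_\sigma(x,y)$ and that every entry of the permutation matrix $P_\sigma$ is either $0$ or $1$.

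First I would handle the diagonal terms. Since $P_\sigma(i_r,j_r)\in\{0,1\}$, we have $Y_{i_rj_r,i_rj_r} = P_\sigma(i_r,j_r)^2 = P_\sigma(i_r,j_r)$, so by the definition of $q$ the first sum equals $\sum_{r=1}^m P_\sigma(i_r,j_r) = q$. Next I would evaluate the off-diagonal sum. For $r<s$ we have $Y_{i_rj_r,i_sj_s} = P_\sigma(i_r,j_r)\cdot P_\sigma(i_s,j_s)$, which equals $1$ precisely when both index pairs are selected by $\sigma$ (both factors are $1$) and is $0$ otherwise. Hence $\sum_{r<s} Y_{i_rj_r,i_sj_s}$ counts unordered pairs drawn from the $q$ selected index pairs, giving $\binom{q}{2}$.

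Combining the two computations, the value of the left-hand side at $P^{[2]}_\sigma$ is $q - \binom{q}{2}$, so the slack is
\begin{equation*}
1 - \left(q - \binom{q}{2}\right) = \frac{2 - 2q + q(q-1)}{2} = \frac{(q-1)(q-2)}{2} = \binom{q-1}{2},
\end{equation*}
as claimed. There is no real obstacle here: the calculation is elementary and parallels the earlier slack lemmas. The only point requiring a moment's care is recognizing that, because of the product structure of $P^{[2]}_\sigma$, the double sum over all $\binom{m}{2}$ pairs $r<s$ collapses to counting pairs among exactly the $q$ satisfied index pairs, which is immediate once the $0/1$ factorization is written out.
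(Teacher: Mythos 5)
Your proof is correct and follows essentially the same approach as the paper: the paper's proof is just the one-line computation $1-q+\binom{q}{2}=\frac{(q-1)(q-2)}{2}$, with the facts that the diagonal sum contributes $q$ and the off-diagonal sum contributes $\binom{q}{2}$ left implicit. Your write-up simply makes those two counting steps explicit before performing the identical algebra.
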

\begin{proof}
The slack equals $1-q+\binom{q}{2}= \frac{2-2q+q(q-1)}{2}=\frac{(q-1)(q-2)}{2}.$\qed
\end{proof}

\begin{theorem}\label{thm:extncompqap4}
		Let $\qaprelax{\ref{qap4}}{n}$ be any bounded relaxation of \qappolytope{n} such that the  inequalities of QAP\ref{qap4} are valid for $\qaprelax{\ref{qap4}}{n}$. Then $\xc(\qaprelax{\ref{qap4}}{n})\geqslant2^{\Omega(n)}$.
\end{theorem}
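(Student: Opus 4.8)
The plan is to follow exactly the template established by the preceding theorems (\ref{thm:extncompqap1}, \ref{thm:extncompqap2}, \ref{thm:extncompqap3}): reduce from the hard matrix $N^1_n$ using the slack formula already proved in Lemma \ref{lem:slack:qap4}. Suppose there is an EF-protocol for the slack matrix of $\qaprelax{\ref{qap4}}{n}$ exchanging at most $c$ bits. I will build from it an EF-protocol for $N^1_m$ (for a suitable $m$ linear in $n$) exchanging $O(c+\log n)$ bits, so that Lemma \ref{lem:nmatrixhard} forces $c=\Omega(n)$, and then Theorem \ref{thm:extformcc} yields $\xc(\qaprelax{\ref{qap4}}{n})\geqslant 2^{\Omega(n)}$.

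Concretely, Alice and Bob receive $a,b\in\{0,1\}^m$ and wish to output $N^1_m(a,b)=(a^\intercal b-1)(a^\intercal b-2)$ in expectation. As in the QAP\ref{qap1} proof, I first handle degenerate inputs by a short preamble: if $a$ is the all-zero vector Alice outputs zero; if $a$ is all ones she signals this with one bit and Bob outputs the relevant count; and if $b$ has too few zeros Bob names the zero positions with $O(\log n)$ bits. In the main case, let $p_1,\dots,p_t$ be the indices where $a_i=1$. Alice builds the QAP\ref{qap4} inequality with $i_r=j_r=p_r$ for $r\in[t]$. Bob builds a permutation $\sigma_b$ with $\sigma_b(i)=i$ whenever $b_i=1$ and $\sigma_b(i)\neq i$ whenever $b_i=0$ (padding the ground set to size $n$ so that enough free indices remain to realize the non-fixed points, exactly the device used in the QAP\ref{qap2} and QAP\ref{qap3} reductions), and selects the vertex $P^{[2]}_{\sigma_b}$, which is feasible because $\qaprelax{\ref{qap4}}{n}$ is a relaxation of \qappolytope{n}. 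Then the number $q$ of pairs $(i_r,j_r)$ with $P_{\sigma_b}(i_r,j_r)=1$ is exactly $a^\intercal b$, since $P_{\sigma_b}(p_r,p_r)=1$ iff $b_{p_r}=1$ iff both $a_{p_r}=1$ and $b_{p_r}=1$.

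Invoking Lemma \ref{lem:slack:qap4} with $q=a^\intercal b$, the slack of Alice's inequality with respect to Bob's vertex is $\binom{a^\intercal b-1}{2}=\tfrac{1}{2}(a^\intercal b-1)(a^\intercal b-2)=\tfrac12 N^1_m(a,b)$. Hence Alice and Bob can run the assumed slack-matrix protocol and simply double every output, obtaining an EF-protocol for $N^1_m$ with $O(c+\log n)$ bits. Since $m=\Theta(n)$, Lemma \ref{lem:nmatrixhard} gives $c=\Omega(n)$, and Theorem \ref{thm:extformcc} completes the proof.

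The one genuine point requiring care, rather than a mechanical copy of the earlier proofs, is the validity of Alice's inequality: QAP\ref{qap4} demands $m,n\geqslant 7$, and here the number of chosen indices is $t=|a|$, which can be as small as $1$. I expect this to be the main (if minor) obstacle. I would resolve it by padding: choose the ambient dimension $n$ large enough (linear in $m$) and pick the index set for the inequality to have the required size at least $7$ by adjoining dummy indices on which $\sigma_b$ acts freely so that those extra diagonal terms $P_{\sigma_b}(i_r,j_r)$ vanish and do not alter $q$. This keeps the slack equal to $\binom{a^\intercal b-1}{2}$ while meeting the size constraints, and it preserves $n=\Theta(m)$ so the asymptotics are unaffected.
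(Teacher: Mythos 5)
Your proposal is correct and follows essentially the same route as the paper's proof: Alice encodes $a$ as a QAP\ref{qap4} inequality on the diagonal index pairs of her support, Bob encodes $b$ as a permutation whose fixed points are exactly the ones of $b$, Lemma \ref{lem:slack:qap4} gives slack $\binom{a^\intercal b-1}{2}=\tfrac{1}{2}N^1(a,b)$, and doubling plus Lemma \ref{lem:nmatrixhard} and Theorem \ref{thm:extformcc} finish the argument. The only divergence is the boundary case $|\mathrm{supp}(a)|<7$: the paper resolves it more simply by having Alice send her at most six nonzero indices using $O(\log n)$ bits so that Bob outputs the exact value, whereas your padding device (at least seven pre-agreed dummy indices that Bob always deranges, so they never contribute to $q$) also works, and in fact would make the degenerate-$a$ preambles unnecessary since $\binom{q-1}{2}=\tfrac{1}{2}N^1_m(a,b)$ holds even when $q\in\{0,1\}$.
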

\begin{proof}

	Suppose Alice and Bob get $a,b\in\{0,1\}^n$ respectively and wish to compute $N^1_n(a,b)$ in expectation. We can assume that $n\geqslant 7$ otherwise they can exchange their entire vectors using six bits. Let $p_1,\ldots,
	p_m$ be the indices where $a$ is non-zero. We can assume that $m\geqslant 7$ otherwise Alice can send the indices where $a$ is non-zero using at most $6\log{n}$ bits and Bob can output the correct value of $N^1_n(a,b)$. So $m,n\geqslant 7$ and Alice creates the inequality corresponding to sets $i_1,\ldots,i_m$ and $j_1,
	\ldots,j_m$ where $i_1{=}j_1{=}p_1,\ldots,\ab i_m{=}j_m{=}p_m$. 
	
	We can also assume that $b$ contains at least two zeroes otherwise Bob can tell Alice the index where $b$ is zero using $\log{n}$ bits and Alice can output the correct value of $N^1_n(a,b)$. Now Bob creates the following permutation $\sigma_b\in S_n$ such that $\sigma_b(i)=i$ if $b_i=1$ and $\sigma_b(i)\neq i$ if $b_i=0$. This is clearly possible since $b$ contains at least two zeroes. Bob selects the vertex $P^{[2]}_{\sigma_b}$ of \qappolytope{n} corresponding to this permutation. Clearly, $a^\intercal b$ equals the number of index pairs $(i_r,j_r)$ in the inequality created by Alice such that $P^{[2]}_{\sigma_b}(i,i)=1$. 
	
	Setting $q=a^\intercal b$ in Lemma \ref{lem:slack:qap4} we get that the slack of Alice's inequality with respect to Bob's point of \qaprelax{\ref{qap4}}{n} is
	exactly $\frac{1}{2}N^1_n(a,b)$ and hence they can just use the protocol for computing the slack matrix of $\qaprelax{\ref{qap4}}{n}$ for computing 
	$N^1_n$ by agreeing that every time they wish to output something they would output twice as much. 
	\qed
\end{proof}

\subsection*{\ackname}
Pawan Aurora is partially supported by grant MTR/2018/000861 of the Science and Engineering Research Board, Government of India. 
Hans Raj Tiwary was partially supported by grant 17-09142S of GA\v{C}R.
{\bibliographystyle{spmpsci}      
\bibliography{qap-facets}}   

\end{document}